\newtheorem{theorem}{Theorem}
\newtheorem{lemma}[theorem]{Lemma}
\newtheorem{proposition}[theorem]{Proposition}
\begin{document}


\title{Quantum Optimal Control of a Lambda System in the Density Matrix Formulation}


\author{Julia Cen}
\email[]{julia.cen@outlook.com}
\affiliation{Iowa State University, Ames, IA 50011}
\author{Domenico D'Alessandro}
\email[]{daless@iastate.edu}
\affiliation{Iowa State University, Ames, IA 50011}


\date{\today}

\begin{abstract}
In various physical implementations of quantum information processing, qubits are realized in a {\it Lambda} type system configuration as two stable lower energy levels coupled indirectly via an unstable higher energy level, that is, in comparison, a lot more susceptible to decoherence. We consider the quantum control problem of optimal state transfer between two isospectral density matrices, over an arbitrary finite time horizon, for the quantum {\it Lambda} system. The cost considered is a compromise between the energy of the control field and the average {\it occupancy} in the highest energy level. We apply a geometric approach that combines the use of the {\it Pontryagin Maximum Principle},   a {\it symmetry reduction technique} to reduce the number of parameters in the resulting optimization problem, and several auxiliary techniques to bound the parameter space in the search for the optimal solution.  We prove several properties of the optimal control and trajectories for this problem including their {\it normality} and {\it smoothness}. We obtain  a system of differential equations that must be satisfied by the  optimal pair of control and trajectory we treat in detail, with numerical simulations,  and solve  a case study involving a Hadamard-like transformation. Our techniques can be adapted to other contexts and promise to push to a more consequential level, the application of geometric control in quantum systems.
\end{abstract}


\maketitle

\section{Introduction}

Due to the detrimental effect of {\it decoherence} from environmental interactions, a key challenge in quantum information processing is the ability to perform fast, high-fidelity quantum operations on a quantum bit, or qubit. This necessitates the study of control for open quantum systems. One important operation is to drive a state to a desired final configuration coherently. An approach to do this is to consider the control problem for the quantum master equation (QME) (see, e.g, \cite{Petruccione}).  Besides the issue of validity of the QME in the control theory settings, where certain conditions for the system-bath interaction have to be satisfied \cite{conEdmo1}, control theory for open quantum systems is not as well developed, and, in general, more complicated than the corresponding theory for closed systems \cite{pmp1,pmp2,pmp3,pmp4,pmp5,pmp6,pmp7,closed1,pmp8,pmp9,pmp10}. Even some fundamental aspects such as controllability \cite{Dirrrev} are not so firmly established since some structure, and in particular unitarity of the evolution, is lost. An alternative approach to control with decoherence minimization is to remain in the closed system setting and set up the control problem so as to minimize the occupancy of the state in regions of the state space where the effect of the environment is stronger. From this perspective, optimal control theory offers a natural formulation, as the occupancy in these regions can be modeled as a cost in an optimal control problem and treated as such. This is the approach we follow in this paper.

Quantum {\it Lambda} systems are an important, prototypical class of models appearing in quantum information processing platforms ranging from NV centers \cite{nv1,nv2, nv3}, trapped ions \cite{ti1,ti2, ti3}, quantum dots \cite{qd1,qd2},
Rydberg atoms \cite{ra1}, superconducting qubits \cite{sc1,sc2}, rare-earth ions \cite{re1}, NMR \cite{nmr1}, and more \cite{Vezvaee}. They have interesting properties and applications including electromagnetically induced transparency \cite{Atac, Ma}, quantum memories \cite{Ma, qm1, qm2, qm3}, quantum cloning \cite{qc1, qc2} and quantum metrology \cite{qs1,qs2}. This system is a three level quantum system where the lowest two levels have to be controlled with a control that does not couple them directly but couples them through a third, higher energy level. This higher energy level is the most subject to decoherence and therefore occupancy in such a level has to be minimized. This has motivated protocols such as {\it STIRAP (Stimulated Raman Adiabatic Passage)} (see, e.g., \cite{Bergmann}) to obtain, over a long time horizon, a population transfer between the two relevant energy levels with small occupancy in the highest energy level. While such adiabatic techniques can achieve nearly perfect transfer of population, long evolution times are a problem. In this study, we propose a non-adiabatic, arbitrarily fast protocol whilst minimizing energy of controls and occupancy in the highest energy level. This provides a pathway to design high-fidelity quantum operations within limited qubit coherence times. In a recent contribution \cite{conYasemin}, we have solved the problem of minimum occupancy population transfer, in {\it finite time}, within the framework of optimal control theory, for {\it pure states}. In this paper, we extend this result to the case of {\it mixed states}, by using tools of geometric control theory. In particular, our initial and final states are  general isospectral density matrix for the lowest two energy levels, that is, matrices of the form $\rho:=\begin{pmatrix} 0 & 0 \cr 0 & \rho_1 \end{pmatrix}$ with $\rho_1$ a $ 2 \times 2$ density matrix. The treatment is more complicated in this  density matrix case and involves some some extra ingredients to cope with the fact that the dimension of the problem is significantly higher.

Our approach is to apply techniques of {\it geometric control theory} (see, e.g., \cite{Agrachev}, \cite{Jurdjevic}) and in particular the {\it Pontryaging Maximum Principle} (PMP)  (see, e.g., \cite{BoscaRev}, \cite{Mybook},  for an introduction in the context of quantum systems) to obtain a system of differential equations which have to be satisfied by any pair of optimal control and trajectory. This approach reduces the problem of the search for optimal control on an (infinite dimensional) space of functions to a finite dimensional space of parameters, the initial conditions in this set of differential equations and certain parameters which appear in the equations. We shall apply  symmetry reduction techniques to further reduce the space  of the optimal parameters. We shall also obtain bounds on such parameters in order to reduce the search to a bounded  set. After this is done the search has to be done numerically by integrating the system of differential equation for various values of the parameters, selecting the parameters that lead to the desired final condition, and then, among these, selecting the parameters and therefore the optimal control which gives the minimum cost.  

The paper is organized as follows. In Section \ref{SoP} we mathematically state the problem we want to solve and set up some of the notations we use in the rest of the paper. We also highlight some assumptions we can make without loss of generality.  Dealing with density matrices one can write the corresponding differential equations in matrix form or by {\it vectorization} of the matrix in an appropriate (orthonormal) basis. This second approach lends itself more naturally to the application of the conditions of the PMP. We shall switch between these two approaches in the course of the paper.  In  Section \ref{PMPsec}  we recall the relevant background in geometric control theory and the statement of the Pontryagin Maximum  Principle which we shall use to solve our optimal control problem. In Section \ref{SysT} we prove that, for our problem, all the extremals are normal (cf. definitions in Section \ref{PMPsec}). This has several consequences, including the fact that the optimal control and trajectory are analytic functions. In this section we derive a system of differential equations that have to be satisfied by the optimal control and trajectory. This is the fundamental system which is used in the search for the optimal control. In Section \ref{Red} we prove several properties of the problem at hand with the goal of reducing the number of parameters in the search for the optimal control. As an illustration of the results obtained in Section \ref{CS} we solve with the help of numerical simulations a case study which involves the implementation of an Hadamard-like gate. Auxiliary results and more complicated proofs are reported in the Appendix.


\section{The model and statement of the problem}\label{SoP}

 The class of optimal control problems we are interested in is for  quantum  systems governed by  the Liouville-von Neumann equation
 \begin{equation}\label{LK}
\dot \rho=[H_0, \rho] +\sum_{j=1}^m [H_j, \rho] u_j, 
\end{equation} 
where $H_0,H_1,...,H_m$ are skew-Hermitian matrices, $\{u_j\}$ the control functions assumed measurable and bounded  over  a { finite interval} $[0,T]$ which is a priori specified.  The   task is to find control functions which  drive the state $\rho$ in (\ref{LK}) from an initial (known) state $\rho_{in}$ to a desired state $\rho_{des}$ and at the same time  minimize a cost functional which depends on the particular situation at hand. In this paper we will solve the problem of optimal control for a three-level quantum system in the {\it Lambda} configuration. Here $H_0$ in (\ref{LK}) is a diagonal matrix $H_0=\texttt{diag}(iE_1,iE_2,iE_3)$ with $E_1> E_2> E_3$ and $m=4$ with  the matrices $H_1,...,H_4$ having only the entries $(1,2)$ (and $(2,1)$) or $(1,3)$ (and $(3,1)$) different from zero so as to `connect'  the level $2$ and $3$ with the level $1$ but not with each-other. The states $|2\rangle$ and $|3\rangle$ are the `logical' states, that is, the states of interest but any state transformation has to necessarily involve the highest energy level state $|1\rangle$. This  is however an undesired state because it is the most subject to de-coherence. The goal then is to achieve a state transfer on the lowest two levels minimizing the average population in the interval $[0,T]$ in the highest energy level. Since we have allowed full control between the state $|1\rangle$ and the states $|2\rangle$ and $|3\rangle$, one can show that such an average population which we call here `{\it occupancy}'  can be made arbitrarily small by using a very large control field (cf. \cite{conYasemin}). Therefore a meaningful cost to be minimized is a compromise between the average magnitude of the control field (the energy) and the occupancy, which is what we shall consider here.

To formalize our problem further, we need to introduce some notations which we will use in the rest of the paper. We define the matrices 

\begin{widetext}
    \begin{equation}\label{base}
G_1:=\begin{pmatrix} 1 & 0 & 0 \cr 0 & 0 & 0 \cr 0 & 0 & 0 \end{pmatrix}, \qquad G_2:=\begin{pmatrix} 0 & 0 & 0 \cr 0 & 1 & 0 \cr 0 & 0 & 0 \end{pmatrix}, \qquad 
G_3:=\begin{pmatrix} 0 & 0 & 0 \cr 0 & 0 & 0 \cr 0 & 0 & 1 \end{pmatrix},
\end{equation}
$$
G_4:=\frac{1}{\sqrt{2}} \begin{pmatrix} 0 & 1 & 0 \cr 1 & 0 & 0 \cr 0 & 0 & 0 \end{pmatrix}, \quad G_5:=\frac{1}{\sqrt{2}} \begin{pmatrix} 0 & i & 0 \cr -i & 0 & 0 \cr 0 & 0 & 0 \end{pmatrix}, \quad 
G_6:=\frac{1}{\sqrt{2}} \begin{pmatrix} 0 & 0 & 1 \cr 0 & 0 & 0 \cr 1 & 0 & 0 \end{pmatrix}, $$

$$G_7:=\frac{1}{\sqrt{2}} \begin{pmatrix} 0 & 0 & i \cr 0 & 0 & 0 \cr -i & 0 & 0 \end{pmatrix}, \qquad 
G_8:=\frac{1}{\sqrt{2}} \begin{pmatrix} 0 & 0 & 0 \cr 0 & 0 & 1 \cr 0 & 1 & 0 \end{pmatrix}, \qquad G_9:=\frac{1}{\sqrt{2}} \begin{pmatrix} 0 & 0 & 0 \cr 0 & 0 & i \cr 0 & -i  & 0 \end{pmatrix}. 
$$
\end{widetext}

to be an orthonormal basis\footnote{$u(3)$ ($su(3)$)  is the Lie algebra of $3  \times 3$ skew-Hermitian matrices (with zero trace). $iu(3)$ is the real vector space of $3 \times 3$ Hermitian matrices.}  in $iu(3)$ with the Frobenius inner product given by $\langle A, B\rangle:=Tr(AB^\dagger)$. We also provide a  commutation table for these matrices  as it will be useful in the following computations.

\vspace{2cm}

\begin{widetext}

\begin{table}[htbp]
\begin{center}
 \begin{tabular}{ | c | c | c | c | c | c | c | c | c |c|}
 \hline
$[ \cdot , \cdot]$  & $ iG_1$ & $ iG_2$ & $iG_3$ & $iG_4$ & $iG_5$ & $iG_6$ & $iG_7$ & $iG_8$ & $iG_9$  \\
 \hline
 $iG_1$ & $0$ & $0$ & $ 0 $  & $i G_5$ & $-iG_4$ & $iG_7$ & $-iG_6$& $0$ & $0$  \\
 \hline
 $iG_2$ & * & $0 $ & $0 $ & $-iG_5$ & $iG_4$ & $0$ & $0$ & $iG_9$ & $-iG_8$\\
 \hline
 $iG_3$ & * & * & $0$ & $0$ & $0$ & $-iG_7$ & $iG_6$ & $-iG_9$ & $iG_8$ \\
 \hline
 $iG_4$& *& *& *& $0$ & $i(G_1-G_2)$ & $\frac{i}{\sqrt{2}}G_9$ & $\frac{-i}{\sqrt{2}} G_8$ & $\frac{i}{\sqrt{2}} G_7$ & $\frac{-i}{\sqrt{2}} G_6$ \\
 \hline
 $iG_5$ & * & * & * & * & $ 0 $ & $\frac{i}{\sqrt{2}} G_8$ & $\frac{i}{\sqrt{2}} G_9$ & $\frac{-i}{\sqrt{2}}G_6$ & $\frac{-i}{\sqrt{2}} G_7$\\
 \hline 
 $iG_6$ & * & * & * & * & * & $0$ & $i(G_1-G_3)$ & $\frac{i}{\sqrt{2}} G_5$ & $\frac{i}{\sqrt{2}} G_4$ \\
 \hline
 $iG_7$ & * & * & * & * & * & * & 0 &  $\frac{-i}{\sqrt{2}} G_4$ & $\frac{i}{\sqrt{2}} G_5 $ \\ 
 \hline
 $iG_8$ & * & * & * & * & * & * & * & $0$       & $i(G_2-G_3)$ \\
 \hline 
 $iG_9$ & * & * & * & * & * & * & * & * & 0\\
 \hline
 \end{tabular} 
 \end{center}
 \caption{Commutation relations in $u(3)$}\label{Tavola1}
\end{table}

\end{widetext}

We can expand  a matrix $\rho$ in $iu(3)$, as $\rho=\sum_{j=1}^9\rho_jG_j$ and define the real vector $\vec \rho=[\rho_1,...,\rho_9]^T$,  the {\it vectorization} of $\rho$. Analogously,  we can expand a matrix $U$ in $u(3)$ as $U=\sum_{j=1}^9u_jiG_j$ with the vectorization of $U$, given by $\vec u=[u_1,...,u_9]^T$.

Without loss of generality, we can eliminate the diagonal (drift) term $H_0$ in (\ref{LK}) by a time varying coordinate transformation $\rho \rightarrow e^{-H_0t} \rho e^{H_0 t}$ to get a driftless system (a passage to the {\it interaction picture} (see, e.g., \cite{Sakurai})).   This does not affect the magnitude of the control and does not affect the population in the energy level $E_1$ and therefore it does not change the cost we are going to minimize. Therefore from this point onward, we shall consider the dynamics for a Lambda system, 
\begin{equation}\label{refsys}
\dot \rho=\sum_{j=4}^7 [iG_j, \rho] u_j. 
\end{equation}
The problem is to find control functions $u_{4,5,6,7}=u_{4,5,6,7}(t)$, defined in $[0,T]$ to drive the state from an initial condition of the form $\rho_0:=\begin{pmatrix} 0 & 0 \cr 0 & \hat \rho_0 \end{pmatrix}$ to a final condition of the form  $\rho_1:=\begin{pmatrix} 0 & 0 \cr 0 & \hat \rho_1 \end{pmatrix}$, with $\hat \rho_0$ and $\hat \rho_1$,  $2 \times 2$,  density matrices (Hermitian, positive semidefinite and with trace $1$). We want to do this by minimizing the cost functional 
\begin{equation}\label{occupacost}
J=J(T,\gamma_0)=\int_0^T \frac{1}{2} \left( \sum_{j=4}^7 u_j^2 (t) \right)+\gamma_0 \langle G_1, \rho(t)\rangle dt, 
\end{equation}
for some $\gamma_0\geq 0$. The first term  of this cost is the {\it energy} of the control, the second term is the average {\it occupancy} in the highest energy level.   For $\gamma_0=0$, we obtain a minimum energy problem. These types of problems for three level quantum systems have been studied with tools of sub-Riemannian geometry for the `lifted' control problem on the Lie group $SU(3)$ (cf. \cite{conBenj}, \cite{BoscaKP}, \cite{conben}). As $\gamma_0$ increases we penalize the occupancy more and more.

By defining  $U \in su(3)$ as $U=\sum_{j=4}^7 iG_j u_j$, system (\ref{refsys}) with the initial condition $\rho_0$ can be written as 
\begin{equation}\label{refsys2}
\dot \rho=[U,\rho], \qquad \rho(0)=\rho_0,
\end{equation}
and the cost $J$ in (\ref{occupacost}) is written as 
\begin{align}\label{occupacost2}
J&=J(T,\gamma_0)\\
&=\int_0^T\frac{1}{2} \langle U, U \rangle+\gamma_0 \langle G_1, \rho(t)\rangle dt \notag\\
&=\int_0^T\frac{1}{2} \| U \|^2 +\gamma_0 \langle G_1, \rho(t)\rangle dt \notag
\end{align}
We notice that if $u=u(t)$, $t \in [0,T]$, drives $\rho_1$ to $\rho_0$ with cost $J$, and trajectory $\rho=\rho(t)$,  then $-u(T-t)$ drives $\rho_0$ to $\rho_1$ with the same cost $J$ and trajectory $\rho(T-t)$. {Therefore we can always exchange the initial and final conditions in any given problem}.

We also notice that if $K$ is a unitary matrix of the form $K:=\begin{pmatrix} e^{i\phi} & 0 \cr 0 & \hat K \end{pmatrix}$ with $\hat K$ a $2 \times 2$ unitary and $\rho=\rho(t)$ is a solution of (\ref{refsys}) with initial condition $\rho_0$ and for a control function $u_{4,5,6,7}$,  then $K\rho K^\dagger$ is also a solution for initial condition $K \rho_0 K^\dagger$  and a control which has the same norm (at any time) as  the original control and a cost,  (\ref{occupacost}), (\ref{occupacost2}),  which is unchanged. Therefore there is no loss of generality in assuming that the initial condition $\rho_0$ is diagonal, that is, of the form 
\begin{equation}\label{rho0}
\rho_0:=\begin{pmatrix} 0 & 0 & 0 \cr 0 & a & 0 \cr 0 & 0 & 1-a \end{pmatrix},
\end{equation}
with $a > 1-a$, while, for the final condition we shall consider a (general) matrix of the form 
\begin{equation}\label{rho1}
\rho_1=\begin{pmatrix} 0 & 0 & 0 \cr 0 & b & \alpha \cr 0 & \alpha^* & 1-b \end{pmatrix}. 
\end{equation}

From a differential geometry viewpoint,   the state $\rho$ of our  problem belongs to the 6-dimensional manifold $M$ given by the Hermitian matrices which are isospectral to $\rho_0$, that is, 
\begin{equation}\label{maniM}
M:=\{ \rho=X\rho_0 X^\dagger  \, | \, X \in SU(3)  \}. 
\end{equation}
The tangent space at $\rho$, $T_\rho M$ can be identified with 
$$
T_\rho M:=[\rho, su(3)], 
$$
and by using the Frobenius inner product $\langle \cdot, \cdot \rangle$, we can also identify the cotangent space at $\rho$, $T^*_\rho M$,  with $ [\rho, su(3)]$ with the pairing between tangent and cotangent vectors  given by the inner product itself. In particular, when we talk about an element in the cotangent space $T^*_\rho M$, we shall write it as $\sigma \in [\rho, su(3)]$ with the understanding that we mean the linear map $T_\rho M \rightarrow \mathbb{R}$ given by $\langle \sigma, \cdot \rangle$. Write 
$ \sigma \in  T^*_\rho M$ as $\sigma=\sum_{j=1}^9a_j G_j$. If $\sigma \in T^*_{\rho_0} M$, with $\rho_0$ in (\ref{rho0}), we have $a_1=a_2=a_3=0$. If $\sigma \in T^*_{\rho_1}M$, with $\rho_1$ in (\ref{rho1}) then, by calculating $[\rho_1, su(3)]$ using the commutation relations in the above table \ref{Tavola1}, we obtain that $\sigma$ has the expansion\footnote{Write $\rho$ as $\rho=bG_2+(1-b)G_3+\sqrt(2)cG_8+\sqrt{2} c G_9$, where $\alpha:=c+id$, and then use the commutation relations   table \ref{Tavola1}. We  obtain that anything in $[\rho, su(3)]$ must be the linear combination of $G_4,$ $G_5$, $G_6$, and $G_7$ and an element in the two dimensional space ${\cal S}$. A preservation of dimension argument (that is $X[\rho_0, su(3)]X^{\dagger}=[X\rho_0 X^\dagger, su(3)]=[\rho_1,su(3)]$, gives the result. }
$$
\sigma=\sum_{j=4}^7 a_j G_j+ A+B,
$$
where $A$ and $B$ are two matrices in the two dimensional space ${\cal S}$ 
spanned by $\{ dG_8-cG_9, (2b-1)G_9+\sqrt{2}d(G_3-G_2), (1-2b)G_8-\sqrt{2}c(G_3-G_2) \}$, 
where $\alpha:=c+id$ is the off diagonal element of $\rho_1$.  Notice in particular that if $\rho_1$ is real (a case that will interest us in the following) $d=0$ and $\sigma \in T^*M$ has the form 
$$
\sigma=\sum_{j=4}^7 a_j G_j+a_8\left( (1-2b) G_8+\sqrt{2}c(G_3-G_2)\right)+a_9 G_9. 
$$

Let us  consider now   the Lie group ${\bf G}:=S^1 \times S^1$ and its representation given by $ \{ K \in SU(3) \, | \, K=\texttt{diag} ( e^{i\phi}, e^{i\psi}, e^{-i(\phi+\psi)} )\}$ acting on $iu(3)$ as,    $K \in {\bf G}$, $\rho \in iu(3)$,  $\rho \rightarrow K \rho K^\dagger$. If $(U,\rho)$ is a pair of control-trajectory from $\rho_0$ to $\rho_1$  for (\ref{refsys2}) with cost $J$, then $(KUK^\dagger, K\rho K^\dagger)$ is an admissible pair from $K\rho_0 K^\dagger=\rho_0$ to $K\rho_1 K^\dagger$, with the same cost $J$.\footnote{Notice that such a subgroup leaves the 
{\it diagonal} initial condition $\rho_0$ in  (\ref{rho0}) unchanged.}  Therefore the minimum cost of driving $\rho_0$ to $\rho_1$ is the same as the minimum cost of driving $\rho_0$ to $K \rho_1 K^\dagger$, for any $K \in {\bf G}$. Furthermore if we find the optimal control $U$ to drive $\rho_0$ to $\rho_1$, we also have the optimal control $KUK^\dagger$ to drive $\rho_0$ to $K\rho_1K^\dagger$. We can therefore set up the problem as finding the minimum cost $J$ in (\ref{occupacost}), (\ref{occupacost2}) to reach the {\it $1$-dimensional submanifold of $M$}, (cf, (\ref{rho1}), (\ref{maniM}))
\begin{align}\label{subman}
M_1&:=\{ K \rho_1 K^\dagger \, | \, K \in {\bf G} \}\\
&=\left\{ \begin{pmatrix} 0 & 0 & 0 \cr 0 & b & \alpha e^{i\psi} \cr 0 & \alpha^* e^{-i \psi} & 1-b \end{pmatrix} \, | \, \psi \in \mathbb{R} \right\}, \notag
\end{align}
and it does not matter which final condition we achieve in $M_1$. They all give the same cost. Therefore, without loss of generality, we shall assume that the final condition is {\it real}, that is 
\begin{equation}\label{realrho}
\rho_1=\begin{pmatrix} 0 & 0 & 0 \cr 0 & b & N \cr 0 & N & 1-b \end{pmatrix}, \quad  b(1-b)-N^2=a(1-a), 
\end{equation}
where the last equality comes from the fact that $\rho_0$ and $\rho_1$ are isospectral. 
{ We can also choose an arbitrary {\it sign} for $N$ and it will be convenient to choose $N\leq 0$. Furthermore, we shall assume $N\not=0$. The case $N=0$ has either a trivial solution $u\equiv0$ in the case where initial and final condition agree or it is   a {\it population  inversion} problem which has the same solution as the population transfer problem for pure states considered in \cite{conYasemin} }. 

Finally, we shall set  $T=1$ in (\ref{occupacost}) and (\ref{occupacost2}). This is also without loss of generality, since if $(U=U(t), \rho=\rho(t))$ is an admissible  pair control-trajectory in $[0,T]$ (driving $\rho_0$ to $\rho_1$), $(TU(Tt),\rho(Tt))$ is an admissible pair.\footnote{In the following by {\it admissible control and trajectory}, we shall mean a measurable  and bounded control and the corresponding trajectory $\rho$ solution of (\ref{refsys}) with desired final and initial conditions (that is, control driving the state $\rho$  from $\rho_0$ to $\rho_1$) in $[0,T]$.} A standard change of variable argument shows that the costs in the interval $[0,T]$ and $[0,1]$, are related by 
$$
J(1,\gamma_0)=TJ\left(T,\frac{\gamma_0}{T^2}\right). 
$$
Therefore, since we have not put a restriction on $\gamma_0 \geq 0$, once we solve the optimal control problem on $[0,1]$ we can obtain the solution on $[0,T]$ for a scaled $\gamma_0$ and vice versa.

 \section{Background in geometric optimal control theory; The Pontryagin Maximum Principle}\label{PMPsec}

In geometric control theory, the standard necessary first order conditions for optimality are given by the {\it Pontryagin Maximum Principle (PMP)}  (see, e.g., \cite{Agrachev}, \cite{Jurdjevic} and \cite{BoscaRev}, \cite{Mybook} for application to quantum systems). We state below the theorem for a {\it Lagrange problem}, that is a cost that has just an integral part (and no cost on the final condition\footnote{For a more general  formulation of the theorem in the $\mathbb{R}^n$ case  we refer to \cite{Flerish}.}), as in our case, on a finite horizon $[0,T]$.

Let us consider a general system (a family of vector fields parametrized by control functions $u $) for  $x$ in a manifold $M$, 
\begin{equation}\label{gensys}
\dot x=f(x,u), \qquad x(0)=x_0, 
\end{equation}
with cost to be minimized 
\begin{equation}\label{costtoB}
J=\int_0^T L(x,u)dt, 
\end{equation}
and $u$ a (vector) control function measurable and bounded in the interval $[0,T]$ with values in a set $\tilde U$, and $L$ is a smooth function defined on $M \times \tilde U$. 
The final state $x(T)$ is constrained to be in a submanifold of $M$, 
${\cal S} \subset M$. In this setting the {\it Pontryagin Maximum Principle} is given by the following theorem. 
\begin{theorem}\label{PMP}
Assume $(x^*,u^*)$ is an optimal pair for the above problem. Then there exists a function $\lambda=\lambda(t)$ with values in the cotangent bundle of $M$, $T^*M$,  and a constant $\mu_0 \leq 0$ not both (identically) zero  such that, defined the function,\footnote{Notice the abuse of notation here. Since $\lambda$ is an element of the cotangent bundle it already contains the information on its base element $x$. The function $\hat H$ is in fact a function of $\lambda$ and $u$ only, that is, defined on $TM \times \tilde U$.}  
\begin{equation}\label{PMPHamiltonian}
\hat H=\hat H(\lambda,x,u)=\lambda^Tf(x,u)+\mu_0L(x,u),\footnote{Here we denote by $\lambda^Tf$ the pairing between an element of the cotangent space, $\lambda$,  and an element of the tangent space $f$, reminiscent of what happens in $\mathbb{R}^n$.}    
\end{equation}
we have 
\begin{enumerate}
\item For almost every $t \in [0,T]$, 
\begin{equation}\label{maximizcond}
 H(\lambda(t),x^*(t),u^*(t)) \geq  H(\lambda(t),x^*(t),v), \quad \forall v \in \tilde U.
\end{equation}

\item $\lambda$ satisfies the differential equation (to be coupled with (\ref{gensys}))
\begin{align}\label{equala}
\dot \lambda^T&=-\lambda^Tf_x(x^*,u^*)-\mu_0L_x(x^*,u^*)\\
&=-\frac{\partial \hat H}{\partial x} (\lambda, x^*,u^*). \notag
\end{align}

\item $H(\lambda(t), x^*(t), u^*(t))$ is constant over all the interval $[0,T]$. 

\item (Transversality Conditions) Denote by $T_{x(T)} {\cal S}$ the tangent space of ${\cal S}$ at the final point $x(T)$. Then 
 \begin{equation}\label{transversalitycond}
\lambda^T(T) T_{x(T)} {\cal S} =0. 
\end{equation}
\end{enumerate}
\end{theorem}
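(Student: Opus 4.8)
The plan is to follow the classical geometric/variational proof of the Pontryagin Maximum Principle, reducing the manifold statement of Theorem~\ref{PMP} to a local one in Euclidean space and then reading off the four conclusions from the geometry of the first-order reachable cone. First I would augment the state by the running cost: set $x^{0}(t):=\int_{0}^{t}L(x^{*}(s),u^{*}(s))\,ds$ and consider the extended system $\dot{\tilde x}=\tilde f(x,u):=(L(x,u),f(x,u))$ on $\mathbb{R}\times M$, so that optimality of $(x^{*},u^{*})$ becomes the geometric statement that no first-order attainable direction at the terminal point $(x^{0}(T),x^{*}(T))$ can simultaneously strictly decrease the cost coordinate $x^{0}$ and be compatible with the terminal constraint $x^{*}(T)\in{\cal S}$. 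Since the optimal trajectory lies in a compact subset of $M$, I would cover it by finitely many charts and argue henceforth in $\mathbb{R}^{n+1}$, checking at the end that the Hamiltonian $\hat H$ of (\ref{PMPHamiltonian}), the covector $\lambda$, the adjoint equation (\ref{equala}) and the maximization condition (\ref{maximizcond}) are intrinsic, i.e.\ transform correctly under change of chart and hence patch to genuine objects on $T^{*}M$.

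The engine of the proof is the family of \emph{needle} (Pontryagin) variations: because $\tilde U$ carries no convexity assumption one cannot vary $u^{*}$ weakly but must replace it by a constant value $v\in\tilde U$ on a short interval $[\tau-\varepsilon,\tau]$ around a Lebesgue point $\tau$ of $u^{*}$. Gronwall-type estimates then give that the endpoint of the perturbed extended trajectory is $\tilde x^{*}(T)+\varepsilon\,w(\tau,v)+o(\varepsilon)$, where $w(\tau,v)$ is the transport of $\tilde f(x^{*}(\tau),v)-\tilde f(x^{*}(\tau),u^{*}(\tau))$ from time $\tau$ to time $T$ along the linearized (variational) flow of the optimal trajectory. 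Combining finitely many such variations at distinct times with arbitrary nonnegative weights and closing the resulting set produces a convex cone ${\cal K}\subset T_{\tilde x^{*}(T)}(\mathbb{R}\times M)$ of first-order attainable directions. The crux is then a topological lemma: if ${\cal K}$ contained in its interior a direction that both lowers $x^{0}$ and is tangent to $\{0\}\times{\cal S}$, a Brouwer fixed-point argument applied to the finite-parameter endpoint map — with the $o(\varepsilon)$ remainders controlled uniformly over the parameters — would produce an admissible control steering $x_{0}$ exactly into ${\cal S}$ with strictly smaller cost, contradicting optimality. Hence that direction lies outside the interior of ${\cal K}$.

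By the Hahn--Banach separation theorem there is then a nonzero covector $(\mu_{0},\lambda(T))\in T^{*}_{\tilde x^{*}(T)}(\mathbb{R}\times M)$ that is $\le 0$ on ${\cal K}$, annihilates $T_{x(T)}{\cal S}$ — this is exactly the transversality condition (\ref{transversalitycond}), item 4 — and has $\mu_{0}\le 0$. Since $L$ does not depend on $x^{0}$, the $x^{0}$-row of the adjoint equation forces the corresponding costate component to be constant, so it equals $\mu_{0}$ identically. Propagating $(\mu_{0},\lambda(T))$ backward along the adjoint of the variational equation of the extended system yields a curve whose $M$-component $\lambda(t)\in T^{*}_{x^{*}(t)}M$ satisfies exactly $\dot\lambda^{T}=-\lambda^{T}f_x(x^{*},u^{*})-\mu_{0}L_x(x^{*},u^{*})=-\partial\hat H/\partial x$, which is item 2; and $(\mu_{0},\lambda(\cdot))$ is not identically zero since $(\mu_{0},\lambda(T))\ne 0$.

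The maximization condition (item 1) then drops out by pairing $\lambda(T)$ with each individual needle direction: $\langle\lambda(T),w(\tau,v)\rangle\le 0$ for every $v\in\tilde U$, and the variational flow conjugates this pairing into $\hat H(\lambda(\tau),x^{*}(\tau),v)-\hat H(\lambda(\tau),x^{*}(\tau),u^{*}(\tau))\le 0$ at each Lebesgue point $\tau$, hence for a.e.\ $t\in[0,T]$. For constancy of $H$ along the optimal trajectory (item 3) — and since the horizon $T$ is fixed here we need constancy, not vanishing — I would either adjoin variations of the switching time $\tau$, or observe that $t\mapsto\max_{v\in\tilde U}\hat H(\lambda(t),x^{*}(t),v)$ is Lipschitz (a pointwise maximum of functions equi-Lipschitz in $t$ along the bounded trajectory), that it coincides a.e.\ with $\hat H(\lambda(t),x^{*}(t),u^{*}(t))$ by item 1, and that wherever $u^{*}$ is approximately continuous $\frac{d}{dt}\hat H(\lambda,x^{*},u^{*})=\frac{\partial\hat H}{\partial x}\dot x^{*}+\frac{\partial\hat H}{\partial\lambda}\dot\lambda=0$ by the canonical equations — and an a.e.-differentiable Lipschitz function with zero derivative is constant. (Analyticity of $(x^{*},u^{*})$ is not part of this statement; it is established separately in Section~\ref{SysT} once normality is proved.) The step I expect to be the real obstacle is precisely the topological lemma: turning ``the forbidden direction is not interior to the first-order cone'' into an honest contradiction with optimality requires a careful quantitative open-mapping/Brouwer argument with error estimates uniform over the finite-dimensional family of combined needle variations; everything else is linear algebra, Gronwall estimates, a separation theorem, and bookkeeping across coordinate charts.
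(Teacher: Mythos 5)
Theorem~\ref{PMP} is the Pontryagin Maximum Principle itself, which the paper states as background in Section~\ref{PMPsec} and does not prove: it is cited to the standard references (Agrachev--Sachkov, Jurdjevic, and the quantum-control surveys). So there is no in-paper proof to compare against; what you have written is an outline of the classical proof from those references, and as an outline it is faithful and correctly adapted to the exact hypotheses used here. You correctly augment the state by the running cost, use needle variations at Lebesgue points to build the Pontryagin cone, separate it from the cost-decreasing directions compatible with the terminal manifold ${\cal S}$ to obtain $(\mu_0,\lambda(T))$, read off transversality (\ref{transversalitycond}) from the fact that the separating covector must annihilate the linear subspace $T_{x(T)}{\cal S}$, propagate backward to get the adjoint equation (\ref{equala}), and recover the maximization condition (\ref{maximizcond}) by pairing with individual needle directions. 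You also correctly note that with the horizon $T$ fixed one gets constancy, not vanishing, of $\hat H$, and that analyticity of the extremals is not part of this statement but is derived later in Section~\ref{SysT} from normality.

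Two places where the sketch is thinnest, for the record. First, the ``topological lemma'' you flag is indeed the genuinely hard step: one must combine finitely many needle variations into a finite-dimensional endpoint map, show it is an approximate homeomorphism onto a neighborhood in the cone with remainders uniform in the weights, and invoke Brouwer; your text acknowledges this but does not carry it out, so as written it is a plan rather than a proof. Second, in the constancy argument the naive chain rule $\frac{d}{dt}\hat H(\lambda,x^*,u^*)=\hat H_x\dot x^*+\hat H_\lambda\dot\lambda$ is not legitimate because $u^*$ is only measurable; the correct route is the one you mention in passing, namely working with the Lipschitz function $t\mapsto\max_{v}\hat H(\lambda(t),x^*(t),v)$ and a Danskin/envelope argument, so that the $u$-dependence never needs to be differentiated. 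Neither point is a conceptual error --- both are handled in the cited references --- but they are the two steps that would need to be written out in full for this to stand as a self-contained proof.
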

Pairs $(x^*,u^*)$ which satisfy the above conditions are called {\it extremals}. They are candidate to be optimal. If they satisfy the above conditions with $\mu_0=0$ they are called {\it abnormal extremals}. These are pairs control-trajectory which satisfy the conditions of the theorem { independently of the function $L$ in the cost (\ref{costtoB})}. Otherwise, they are called {\it  normal extremals}. In that case $\mu_0$ can be normalized to be equal to $-1$. The vector $\lambda$ is called the {\it costate}.\footnote{Notice that if $\mu_0=0$ and $\lambda(0)=0$, $\lambda(t)$ is zero for every $t$.} We shall call the function $\hat H$ in (\ref{PMPHamiltonian}), the {\it control Hamiltonian}.  

Most of the theory of optimal control, such as the above theorem,  is developed for dynamical equations and costs, which are described in terms of {\it real} variables/functions. Therefore, it is convenient to see equation (\ref{LK}) and the costs above introduced in terms of real variables/functions. Let us study how the conditions from the Pontryagin Maximum Principle specialize for the bilinear control system of the form (\ref{LK}).

The dynamical equation (\ref{LK}) in terms of real variables takes the form
 \begin{equation}\label{formareal}
 \frac{d}{dt} \vec \rho=ad_{H_0} \vec \rho+\sum_{j=1}^m u_j ad_{H_j}  \vec \rho  
 \end{equation} 
 where $ad_{H_j}$, $j=0,1,...,m$,  are the matrices corresponding (in the given basis (such as (\ref{base})) to the operators of commutators with $H_0,H_1,...,H_m$. The costate  equations (\ref{equala}), using (\ref{formareal}),  become 
$$
\frac{d}{dt} \lambda^T=-\lambda^T \left( ad_{H_0}+\sum_{j=1}^m ad_{H_j} u_j\right)-\mu_0L_{\vec \rho} (\vec \rho, u). 
$$
The matrices $ad_{H_0}, ad_{H_1},...,ad_{H_m}$ are skew symmetric (if we assume, as we do here, that $\rho$ is expanded with respect to an orthonormal basis). Therefore we can write 
\begin{equation}\label{precurs}
\frac{d}{dt} \lambda= \left( ad_{H_0}+\sum_{j=1}^m ad_{H_j} u_j \right) \lambda -\mu_0L_{\vec \rho}^T (\vec \rho, u). 
\end{equation}

\section{Differential equations for the optimal candidates}\label{SysT}

Using the PMP discussed in the previous section, we shall now derive a system of differential equations which has to be satisfied by the optimal control-trajectory pairs for our problem. We shall go back and forth between the matrix form of the variables and the vectorized form.

Let us  rewrite equation (\ref{precurs})  in matrix form and let $\sigma$ be the Hermitian (costate) zero-trace matrix associated with $\lambda$, that is, $\lambda=\vec \sigma$. The term $L^T_{\vec \rho}(\vec \rho, u)$ in (\ref{precurs}) may have different forms. In the case of an occupancy cost where the part of $L$ depending on $\vec \rho$ is $\gamma_0 \vec k^T \vec \rho= \gamma_0 \langle K, \rho\rangle$, it is $\gamma_0 \vec k^T$ which becomes $\vec k$ after transposition which corresponds to a matrix $K$ in the matrix notation. Thus, in this case,  the adjoint equations in the matrix notation are written as
$
\frac{d}{dt} \sigma= [H_0, \sigma]+\sum_{j=1}^m [H_j, \sigma] u_j
-\mu_0 \gamma_0 K, 
$
which, specializing to the three-level Lambda system governed by the Liouville-von Neumann equation (\ref{refsys}) with cost (\ref{occupacost}) becomes 
\begin{equation}\label{costaeq}
    \dot{\sigma}=\left[\sum_{j=4}^7 iG_ju_j,\sigma\right]-\mu_0\gamma_0 G_1.
\end{equation}

The control Hamiltonian is
\begin{equation}\label{controham}
    \hat H=\langle\sigma,\left[\sum_{j=4}^7 iG_ju_j,\rho\right]\rangle+ \mu_0 \left( \gamma_0\langle G_1,\rho \rangle+\frac{1}{2}\sum_{j=4}^7 u_j^2 \right).
\end{equation}

\vspace{0.25cm}

\noindent Let us define the variables 
\begin{equation}\label{defihj}
    h_j:=\langle\sigma,[iG_j,\rho]\rangle, \qquad j=1,...,9, 
\end{equation} with which the control Hamiltonian $\hat H$ in (\ref{controham}) can be written as 
\begin{equation}\label{contrham}
\hat H=\sum_{j=4}^7 h_j u_j+\mu_0 \left( \gamma_0 \langle G_1,\rho\rangle+\frac{1}{2} \sum_{j=4}^7 u_j^2 \right).
\end{equation}
Differentiating  $h_j$, in (\ref{defihj}) and using (\ref{refsys}) and (\ref{costaeq}), along with the definition $U:=\sum_{k=4}^7 iG_k u_k$, we obtain,
\begin{align}
    \dot h_j &= \langle \dot\sigma,[iG_j,\rho]\rangle+\langle \sigma, [iG_j,\dot\rho]\rangle,\label{auxiliaryE} \\
    &= \langle [U,\sigma], [iG_j,\rho]\rangle -\mu_0 \gamma_0 \langle G_1, [iG_j,\rho]\rangle + \langle [U,\rho],[\sigma, iG_j]\rangle  \nonumber\\
    &= -\mu_0 \gamma_0 \langle G_1, [iG_j,\rho]\rangle -\langle \sigma, [U, [iG_j, \rho]]\rangle - \langle \sigma, [iG_j, [\rho, U]]\rangle \nonumber \\
    &=-\mu_0 \gamma_0 \langle \rho, [G_1,i G_j]\rangle+\langle \sigma, [\rho, [U, iG_j]]\rangle. \nonumber
\end{align}
Here we have used the property of the Frobenius inner product $\langle A, [B,C]\rangle=\langle B, [C,A] \rangle$ and the Jacobi identity $[A,[B,C]]+[B,[C,A]]+[C,[A,B]]=0$. Consider now the cases $j=1,2,3,8,9$ in (\ref{auxiliaryE}). The first term on the right hand side is zero since $[G_1,iG_j]=0$. As for the second term, we calculate 
$
[U, iG_j]=\sum_{k=4}^7 [iG_k, iG_j]u_k. 
$
Consider, as an illustration, the case $j=1$ (the other cases are similar). Using the commutators in Table \ref{Tavola1} we obtain 
$$
\sum_{k=4}^7 [iG_k, iG_1]u_k=-iG_5 u_4+iG_4u_5-iG_7 u_6 +iG_6 u_7, 
$$
which placed into (\ref{auxiliaryE}) with the definitions (\ref{defihj})  gives 
$$
\dot h_1=h_5 u_4 -h_4 u_5+h_7 u_6 -h_6 u_7. 
$$
Now, there are two possibilities for the control Hamiltonian (\ref{contrham}): 1) $\mu_0=0$, in which case the extremal is abnormal. In this case the maximization condition (\ref{maximizcond}) of Theorem \ref{PMP} gives $h_4\equiv h_5\equiv h_6\equiv h_7\equiv 0$. 2) $\mu_0=-1$, in which case the maximization condition  gives 
$$
u_j=h_j,\qquad j=4,5,6,7. 
$$
In both cases, we have $\dot h_1\equiv 0$. Analogously we find that $\dot h_2\equiv \dot h_3 \equiv \dot h_8 \equiv \dot h_9 \equiv 0$. Therefore $h_j$, for $j=1,2,3,8,9$ are constants. Information on the values of these constants can be obtained using the initial and final conditions and the transversality conditions (\ref{transversalitycond}). In particular, calculating $h_{1,2,3}$ at time $t=0$,  using the definition (\ref{defihj}) and the fact that $\rho_0$ in (\ref{rho0}) commutes with $G_{1,2,3}$, we obtain $h_1=h_2=h_3=0$.  To obtain the condition for $h_8$, we recall that, because of the symmetry by the Lie group ${\bf G}$ defined in the paragraph leading to formula  (\ref{subman}),  
 an optimal control to reach $\rho_1$ in (\ref{realrho}) is also an optimal control to reach the submanifold   $M_1$ in (\ref{subman}). Therefore from the transversality condition (\ref{transversalitycond}), we have for any real $k$ and $N$ defined in (\ref{realrho}), 
$$
\frac{d}{dt}|_{t=0} \langle \sigma(T), e^{iDt} \rho_1 e^{-iDt} \rangle= kN\langle \sigma(T),G_9\rangle=0, 
$$
which implies since we have assumed $N\not=0$, $\langle \sigma(T), G_9 \rangle =0$. Now calculate $h_8:=\langle \sigma, [iG_8, \rho]\rangle$, at time $T$, using from (\ref{realrho}), $\rho=bG_2+\sqrt{2}NG_8+(1-b)G_3$ , and the commutation relations in Table \ref{Tavola1}, 
\begin{align}
h_8&=\langle \sigma, [iG_8, \rho] \rangle\\
&=b \langle \sigma, [iG_8, G_2]\rangle + (1-b) \langle \sigma, [iG_8, G_3]\rangle \notag\\
&=(1-2b)\langle \sigma, G_9\rangle=0.  \notag
\end{align}

We now derive the equations for $h_{4,5,6,7}$, keeping in mind that we have just proved $h_1=h_2=h_3=h_8=0$. Using (\ref{Tavola1}), we obtain, 

\begin{widetext}
    \begin{align}
   [iG_4,U]= \left[iG_4,\sum_{j=4}^7 iG_j u_j\right]&=i[(G_2-G_1)u_5+\frac{1}{\sqrt{2}}G_8u_7-\frac{1}{\sqrt{2}}G_9u_6], \nonumber \\
   [iG_5, U]=  \left[iG_5,\sum_{j=4}^7 iG_j u_j\right]&=i[(G_1-G_2)u_4-\frac{1}{\sqrt{2}}G_8u_6-\frac{1}{\sqrt{2}}G_9u_7], \nonumber \\
    [iG_6, U]=\left[iG_6,\sum_{j=4}^7 iG_j u_j\right]&=i[(G_3-G_1)u_7+\frac{1}{\sqrt{2}}G_8u_5+\frac{1}{\sqrt{2}}G_9u_4], \nonumber \\
    [iG_7,U]= \left[iG_7,\sum_{j=4}^7 iG_j u_j\right]&=i[(G_1-G_3)u_6-\frac{1}{\sqrt{2}}G_8u_4+\frac{1}{\sqrt{2}}G_9u_5], \nonumber 
\end{align}
\end{widetext}

and replacing in (\ref{auxiliaryE}) using again the commutation relations in Table \ref{Tavola1}, we obtain,  

\begin{widetext}
    \begin{align}
   \dot{h}_4 &= -\mu_0 \gamma_0 \langle \rho, G_5 \rangle + h_2 u_5 -h_1 u_5 +\frac{h_8}{\sqrt{2}} u_7 -\frac{h_9}{\sqrt{2}} u_6= 
   -\mu_0 \gamma_0 \langle \rho, G_5 \rangle -\frac{h_9}{\sqrt{2}} u_6, \label{Auxi}\\
   \dot{h}_5 &= \mu_0 \gamma_0 \langle \rho, G_4\rangle +h_1 u_4-h_2 u_4 -\frac{h_8}{\sqrt{2}}u_6-\frac{h_9}{\sqrt{2}}u_7= 
   \mu_0 \gamma_0 \langle \rho, G_4\rangle -\frac{h_9}{\sqrt{2}}u_7,  \nonumber \\
   \dot{h}_6 &=-\mu_0 \gamma_0 \langle \rho, G_7\rangle + h_3 u_7-h_1 u_7   + \frac{h_8}{\sqrt{2}}u_5+\frac{h_9}{\sqrt{2}}u_4=-\mu_0 \gamma_0 \langle \rho, G_7\rangle 
   +\frac{h_9}{\sqrt{2}}u_4, \nonumber  \\
   \dot{h}_7 &= \mu_0 \gamma_0 \langle \rho, G_6 \rangle + h_1 u_6-h_3 u_6 -\frac{h_8}{\sqrt{2}} u_4 +\frac{h_9}{\sqrt{2}} u_5=
  \mu_0 \gamma_0 \langle \rho, G_6 \rangle +\frac{h_9}{\sqrt{2}} u_5 . \nonumber 
\end{align}
\end{widetext}

From these equations, we now prove the following theorem. 

\begin{theorem}\label{normalita}
The optimal control problem for system (\ref{refsys}) and cost (\ref{occupacost}) and intial and final conditions (\ref{rho0}) (\ref{realrho}) admits only {\em normal} extremals.
\end{theorem}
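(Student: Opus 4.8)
The plan is to argue by contradiction: I would suppose that an abnormal extremal exists, so that $\mu_0=0$, and show that this forces the costate $\lambda$ (equivalently the Hermitian matrix $\sigma$) to vanish identically, contradicting the clause of Theorem~\ref{PMP} that $\lambda$ and $\mu_0$ are not both identically zero. Most of the work is already done in the computations preceding the statement. First I would collect the facts that hold for \emph{every} extremal, abnormal or normal, checking in particular that their derivations did not use $\mu_0=-1$: the quantities $h_1,h_2,h_3,h_8,h_9$ are constant along any extremal; moreover $h_1=h_2=h_3=0$ by evaluating (\ref{defihj}) at $t=0$, where $\rho_0$ commutes with $G_{1,2,3}$, and $h_8=0$ by evaluating at $t=1$ and using the transversality condition for the submanifold $M_1$ together with $N\neq 0$. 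Next, in the abnormal case $\mu_0=0$ the maximization condition (\ref{maximizcond}) gives $h_4\equiv h_5\equiv h_6\equiv h_7\equiv 0$. Hence the only $h_j$ that can be nonzero is the constant $h_9$.

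Then I would specialize the equations (\ref{Auxi}) to $\mu_0=0$ and $h_8=0$, obtaining $\dot h_4=-\tfrac{h_9}{\sqrt 2}u_6$, $\dot h_5=-\tfrac{h_9}{\sqrt 2}u_7$, $\dot h_6=\tfrac{h_9}{\sqrt 2}u_4$, $\dot h_7=\tfrac{h_9}{\sqrt 2}u_5$. Since $h_4,\dots,h_7$ vanish identically their derivatives do too, so $h_9u_j=0$ almost everywhere for $j=4,5,6,7$. I would then split into two cases. If $h_9\neq 0$, then $u_4=u_5=u_6=u_7=0$ a.e., so $U\equiv 0$ and, by (\ref{refsys2}), $\rho(t)\equiv\rho_0$; but the final condition requires $\rho(1)=\rho_1$ and $\rho_1\neq\rho_0$ because $N<0$, a contradiction. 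If $h_9=0$, then all nine $h_j=\langle\sigma,[iG_j,\rho]\rangle$ vanish identically; since $\{iG_j\}_{j=1}^9$ is a basis of $u(3)$ and $[\rho,iI_3]=0$, the vectors $\{[iG_j,\rho(t)]\}_{j=1}^9$ span $[\rho(t),su(3)]=T_{\rho(t)}M$, so $\sigma(t)$ is Frobenius-orthogonal to all of $T_{\rho(t)}M$; but the costate space was identified with $[\rho(t),su(3)]$, so $\sigma(t)\in[\rho(t),su(3)]$ as well, whence $\langle\sigma(t),\sigma(t)\rangle=0$ and $\sigma(t)=0$ for all $t$. Then $\lambda\equiv 0$ and $\mu_0=0$, which is excluded by the PMP. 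Either branch gives a contradiction, so no abnormal extremal exists and every extremal is normal.

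I expect the main obstacle to be bookkeeping rather than anything deep. The delicate points are: verifying that the ``unconditional'' identities ($h_{1},h_2,h_3,h_8$ vanishing, $h_9$ constant) really do hold in the abnormal case and were not silently obtained under the normalization $\mu_0=-1$; handling the almost-everywhere qualifier carefully when passing from $h_9u_j=0$ to $u_j\equiv 0$ (here it is essential that $h_9$ is a constant); and stating cleanly the identification $\sigma(t)\in[\rho(t),su(3)]$ together with the spanning property of $\{[iG_j,\rho(t)]\}_{j=1}^9$, which is the one structural ingredient that makes the vanishing of every $h_j$ collapse $\sigma$ to zero. Once the consequences already established before the theorem are invoked, the remaining argument is short.
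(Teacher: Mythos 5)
Your proposal is correct and follows essentially the same route as the paper: assume $\mu_0=0$, use the maximization condition to get $h_4\equiv\cdots\equiv h_7\equiv 0$, feed this into the system (\ref{Auxi}) to force either $U\equiv 0$ (excluded since $\rho_1\neq\rho_0$) or $h_9=0$, and in the latter case conclude that all $h_k$ vanish so that $\sigma$ annihilates the whole tangent space $[\rho,su(3)]$, contradicting the PMP nontriviality of the pair $(\mu_0,\sigma)$. Your extra care about which identities hold independently of the normalization $\mu_0=-1$, and about using the constancy of $h_9$ when passing from $h_9u_j=0$ a.e.\ to $u_j\equiv 0$, only makes explicit what the paper's proof uses implicitly.
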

\begin{proof}
Assume by contradiction $\mu_0=0$ so that we can rewrite equations (\ref{Auxi}) in matrix form as 
 \begin{equation}\label{abauxmat}
 \begin{pmatrix}
 \dot h_4 \\
 \dot h_5 \\
 \dot h_6 \\
 \dot h_7
 \end{pmatrix}=
      \frac{1}{\sqrt{2}}\begin{pmatrix}
      0 & 0 & -h_9 & 0 \\
      0 & 0 & 0 & -h_9  \\
      h_9 & 0 & 0 & 0  \\
      0 & h_9 & 0 & 0 
   \end{pmatrix}
   \begin{pmatrix}
      u_4 \\
       u_5 \\
        u_6 \\
         u_7
\end{pmatrix}. 
\end{equation}
 Applying the maximization condition to the control Hamiltonian (\ref{contrham}) with $\mu_0=0$ leads to $h_4\equiv h_5\equiv h_6\equiv h_7 \equiv 0$. So equation (\ref{abauxmat}) gives $u_4\equiv u_5 \equiv u_6 \equiv u_7 \equiv 0$ unless $h_9=0$. The identically zero control is clearly not an admissible extremal because it will correspond to a constant trajectory. Therefore, necessarily we must have $h_9=0$.  However, in this case  $h_k=0$ for all $k\in\{1,\dots,9\}$. This implies that $\langle\sigma,\left[u(3),\rho\right]\rangle=0$.  Since $\left[u(3),\rho\right]$ is  the tangent space at $\rho$ of the state manifold,  this means that the cotangent vector $\sigma$ is zero on the whole tangent space, which contradicts Pontryagin's Maximum Principle (the pair $\mu_0,\sigma$ cannot be zero).

 \end{proof}

Since abnormal extremals do not exist, we can set $\mu_0=-1$ in (\ref{contrham}). Carrying out the maximization with respect to $u$ of condition (\ref{maximizcond}), we get $u_j=h_j$, for $j=4,5,6,7$, and the differential system (\ref{Auxi}) for $h_4,...,h_7$ becomes a differential system for $u_4,...,u_7$. Let us now write such a differential system in matrix form. By expanding $U$ as $U:=\sum_{j=4}^7 iG_j u_j$, taking the derivative of $U$ and using (\ref{abauxmat}) with the $h_j$'s replaced by the corresponding $u_j$'s,  we get,  with $\tilde h_9=\frac{h_9}{\sqrt{2}}$, 
$$
\dot U=\tilde h_9C+i\gamma_0 B, 
$$
where $C:=-iG_4u_6-iG_5u_7+iG_6u_4+iG_7u_5$ and $B:=G_4\rho_5-G_5\rho_4+G_6\rho_7-G_7 \rho_6)$. A direct calculation, using the commutators in Table \ref{Tavola1} shows that $C=\sqrt{2}[iG_9,U]$. Therefore, using the definition of $\tilde h_9$, we have 
$$
\dot U=h_9[iG_9, U]+i\gamma_0B. 
$$
Consider now $B$. It is convenient to decompose $\rho$ as $\rho=\rho_P+\rho_K$ where $\rho_P$ ($\rho_K$) is the (block) anti-diagonal (diagonal) part of $\rho$. In particular $\rho_P:=\sum_{j=4}^7 \rho_jG_j$. Using again the commutators in Table \ref{Tavola1}, with this expression, we get $[iG_1, \rho_P]=-B$. Therefore we get 
$$
\dot U=h_9[iG_9,U]-i \gamma_0 [iG_1, \rho_P]. 
$$ 
However, notice that $G_1$ commutes with $\rho_K$. Therefore $ [iG_1, \rho_P]= [iG_1, \rho]$. Summarizing, and combining with (\ref{refsys2})  we can write the equations satisfied by the optimal control and trajectory as 
\begin{eqnarray}
\dot \rho=[U,\rho], \qquad \rho(0)=\rho_0 \label{aux1}\\
\dot U=h_9[iG_9, U] +\gamma_0[G_1,\rho]. \qquad U(0)=P\label{aux2}
\end{eqnarray}
The matrix $P=U(0)$ is a matrix of the form 
\begin{equation}\label{matrixP}
P=\begin{pmatrix}0 & \alpha & \beta \cr -\alpha^* & 0 & 0 \cr -\beta^* & 0 & 0 \end{pmatrix}, 
\end{equation}
which gives the initial condition of the control. The parameters $\alpha$ and $\beta$ need to be tuned. We summarize the discussion in the following theorem where we also record the form of the control Hamoltonian $\hat H$ which follows from using $h_j=u_j$ in (\ref{contrham}) with $\mu_0=-1$. 

\begin{theorem}\label{summary}
Assume $(\rho,U)$ with $U=\sum_{j=4}^7u_jiG_j$ is an optimal pair. Then it  satisfies  (\ref{aux1}) (\ref{aux2}) for some value of the matrix $P$ in (\ref{matrixP})  and the parameter $h_9$ in (\ref{aux2}). The control Hamiltonian $\hat H$ is constant along the optimal trajectory, and it is given by 
\begin{equation}\label{cHcon}
\hat H=\frac{1}{2} \|U(t)\|^2+\gamma_0 \langle G_1, \rho(t)\rangle \equiv \frac{1}{2}\|P\|^2, 
\end{equation}
\end{theorem}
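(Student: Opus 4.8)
The statement is essentially a consolidation of the computations carried out in the paragraphs immediately preceding it, so my plan is to assemble those pieces in order and supply the short extra argument needed for the control Hamiltonian.

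\emph{Equations (\ref{aux1})--(\ref{aux2}) and the form of $P$.} By Theorem \ref{normalita} every extremal is normal, so we may take $\mu_0=-1$ throughout; the maximization condition (\ref{maximizcond}) applied to $\hat H$ in (\ref{contrham}) then forces $u_j=h_j$ for $j=4,\dots,7$. I would next recall that $h_1,h_2,h_3,h_8,h_9$ are constant along the extremal (from differentiating (\ref{defihj}) and using Table \ref{Tavola1}, i.e.\ (\ref{auxiliaryE})), and that $h_1=h_2=h_3=0$ by evaluating at $t=0$, where $\rho_0$ in (\ref{rho0}) commutes with $G_1,G_2,G_3$, while $h_8=0$ by the transversality condition at $t=T$ together with the assumption $N\neq 0$. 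Feeding $h_1=h_2=h_3=h_8=0$ and $u_j=h_j$ into (\ref{Auxi}), rewriting in matrix form, and using the identities $C=\sqrt 2\,[iG_9,U]$ and $[iG_1,\rho_P]=[iG_1,\rho]$ established just above the theorem, gives $\dot U=h_9[iG_9,U]+\gamma_0[G_1,\rho]$; combined with the state equation (\ref{refsys2}) this is (\ref{aux1})--(\ref{aux2}), with $h_9$ a scalar constant. The claimed form (\ref{matrixP}) of $P=U(0)$ is then immediate: $U(t)=\sum_{j=4}^{7}u_j(t)\,iG_j$ lies at all times in the span of $iG_4,\dots,iG_7$, and a one-line check against (\ref{base}) shows this span is exactly the set of matrices of the shape (\ref{matrixP}).

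\emph{The control Hamiltonian.} Constancy of $\hat H$ along the optimal trajectory is precisely item~3 of the Pontryagin Maximum Principle (Theorem \ref{PMP}). For the explicit form I would substitute $\mu_0=-1$ and $h_j=u_j$ into (\ref{contrham}): the term $\sum_{j=4}^{7}h_ju_j$ becomes $\sum_{j=4}^{7}u_j^2$, which equals $\|U\|^2$ since $\{G_j\}$ is Frobenius-orthonormal, and collecting this with the $\frac{1}{2}\sum u_j^2$ coming from $\mu_0L$ yields the expression (\ref{cHcon}). Finally, to pin down the constant value, evaluate at $t=0$: there $U(0)=P$, so $\frac{1}{2}\|U(0)\|^2=\frac{1}{2}\|P\|^2$, while $\rho(0)=\rho_0$ is diagonal with vanishing $(1,1)$ entry, whence $\langle G_1,\rho_0\rangle=Tr(G_1\rho_0)=0$; hence the (constant) value of $\hat H$ is $\frac{1}{2}\|P\|^2$, completing (\ref{cHcon}).

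\emph{Main difficulty.} There is no deep obstacle: the theorem is a summary, and every nontrivial ingredient (normality, the vanishing of $h_1,h_2,h_3,h_8$, the commutator identities) has already been established. The only points that need care are bookkeeping ones — verifying the algebra that turns (\ref{Auxi}) into (\ref{aux2}), checking that the span of $iG_4,\dots,iG_7$ is exactly the matrices (\ref{matrixP}), and observing that the reduction of (\ref{aux2}) to a system closed in $(U,\rho)$ alone, with $h_9$ the single remaining scalar besides $P$, is precisely what the earlier vanishing statements buy us.
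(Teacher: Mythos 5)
Your proposal follows the paper's route exactly: the theorem is indeed just a consolidation of the preceding derivation (normality from Theorem \ref{normalita}, hence $\mu_0=-1$ and $u_j=h_j$; the vanishing of $h_1,h_2,h_3,h_8$; the commutator identities turning (\ref{Auxi}) into (\ref{aux2})), plus item~3 of Theorem \ref{PMP} for constancy of $\hat H$ and the evaluation at $t=0$ (where $\langle G_1,\rho_0\rangle=0$ and $U(0)=P$) to pin down the constant $\tfrac12\|P\|^2$. Your additional check that the span of $iG_4,\dots,iG_7$ is precisely the set of matrices of the form (\ref{matrixP}) is correct and worth making explicit.

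One point you should not gloss over: the substitution you describe does \emph{not} literally yield (\ref{cHcon}) as printed. Putting $h_j=u_j$ and $\mu_0=-1$ into (\ref{contrham}) gives
\begin{equation*}
\hat H=\sum_{j=4}^{7}u_j^2-\gamma_0\langle G_1,\rho\rangle-\tfrac12\sum_{j=4}^{7}u_j^2=\tfrac12\|U\|^2-\gamma_0\langle G_1,\rho\rangle,
\end{equation*}
i.e.\ the occupancy term enters with a \emph{minus} sign, whereas (\ref{cHcon}) displays a plus. The minus sign is the one consistent with how the identity is actually used later (Subsection \ref{BNC} writes $\tfrac12\|U(t)\|^2=\tfrac12\|U(0)\|^2+\gamma_0\langle G_1,\rho\rangle$, which follows from $\hat H=\tfrac12\|U\|^2-\gamma_0\langle G_1,\rho\rangle\equiv\tfrac12\|P\|^2$), so (\ref{cHcon}) appears to carry a typographical sign error; the constant value $\tfrac12\|P\|^2$ is unaffected since $\langle G_1,\rho_0\rangle=0$. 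You should either derive the minus-sign version or explicitly flag the discrepancy, rather than asserting that the computation "yields the expression (\ref{cHcon})."
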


\vspace{0.25cm}

We have therefore reduced the optimal control problem to the following:

\vspace{0.25cm}

\noindent {\bf Problem:} {\it Find a matrix $P$ of the form (\ref{matrixP}) and a real constant $h_9$ such that 

\begin{enumerate}
\item The solution of (\ref{aux1}),(\ref{aux2}) attains the value $\rho_1$ in (\ref{realrho}) at time $T$ (recall we can take $T=1$). 

\item The cost (\ref{occupacost2}) is minimized. 

\end{enumerate}}

\section{Reduction of the number of parameters and auxiliary results}\label{Red}

The {\bf Problem} stated at the end of the previous section involves a search over $\mathbb{C}^2 \times \mathbb{R}$, an infinite number of parameters.  In order to reduce the search to a compact set and to reduce possible redundancies, we shall develop some further auxiliary results. We start by studying the dependence of the optimal cost on the value $\gamma_0$ which will give us   bounds on the value of the optimal cost for a certain $\gamma_0$ in terms of the cost for a different $\gamma_0$. From this we can derive bounds for $\|P\|=\|U(0)\|$. 

\subsection{Dependence of the optimal cost on $\gamma_0$}. 

Consider two values of $\gamma_0$, $\gamma_1 < \gamma_2$, and denote by $J_{\gamma_1}$ ($J_{\gamma_2}$) the optimal cost for $\gamma_1$ ($\gamma_2$) and by $(U_1,\tilde \rho_1)$   
($(U_2,\tilde \rho_2)$) the corresponding pair of optimal control-trajectory. Thus we have for the optimal costs,  
$$
J_{\gamma_{1,2}}=\int_0^1 \frac{1}{2} \|U_{1,2} (t) \|^2+\gamma_{1,2} \langle G_1, \tilde \rho_{1,2}(t)\rangle dt. 
$$
The following inequalities hold:
\begin{widetext}
    \begin{equation}\label{inc}
J_{\gamma_1}\leq \int_0^1\frac{1}{2} \|U_2\|^2 +\gamma_1 \langle G_1,\tilde \rho_2\rangle dt \leq J_{\gamma_2}\leq \int_0^1\frac{1}{2} \|U_1\|^2+\gamma_2\langle G_1, \tilde \rho_1\rangle dt=
J_{\gamma_1}+(\gamma_2-\gamma_1)\int_0^1 \langle G_1,\tilde \rho_1\rangle dt. 
\end{equation}
\end{widetext}

The first inequality is due to the optimality of $(U_1,\tilde \rho_1$) for $\gamma_1$. The second one follows from $\gamma_2> \gamma_1$. The third one follows from the optimality of  $(U_2,\tilde \rho_2$) for $\gamma_2$. Notice that since $\int_0^1 \langle G_1,\tilde \rho_1\rangle$ is upper bounded by the maximum eigenvalue of $\tilde \rho_1$ which is the same as the maximum eigenvalue of $\rho_0$ in (\ref{rho0}), that is, $a$.  We have 
\begin{equation}\label{crescita}
J_{\gamma_1}\leq J_{\gamma_2} \leq J_{\gamma_1}+a(\gamma_2-\gamma_1). 
\end{equation}
The following holds. 
\begin{proposition}\label{crescprop}
The optimal cost is nondecreasing with $\gamma_0$ and grows at most linearly according to (\ref{crescita}). 
\end{proposition}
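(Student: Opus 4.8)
The statement follows almost immediately from the chain of inequalities (\ref{inc}) and its consequence (\ref{crescita}), which have already been assembled in the text; what remains is to package that reasoning cleanly into the two assertions of the proposition. The plan is therefore to (i) verify that the interpolation argument producing (\ref{inc}) is legitimate, i.e. that the optimal trajectory for one value of $\gamma_0$ is an \emph{admissible competitor} for the other, and (ii) read off monotonicity and the linear upper bound from the two ends of (\ref{crescita}).

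First I would fix $\gamma_1 < \gamma_2$ and recall that, by definition of the cost functional (\ref{occupacost2}), the set of admissible control–trajectory pairs (those steering $\rho_0$ to $\rho_1$ on $[0,1]$) does not depend on $\gamma_0$: only the numerical value of the cost does. Hence $(U_1,\tilde\rho_1)$ and $(U_2,\tilde\rho_2)$ are each admissible for \emph{both} problems, which is exactly what licenses the outer inequalities in (\ref{inc}): the left one because $(U_1,\tilde\rho_1)$ is optimal for $\gamma_1$ and $(U_2,\tilde\rho_2)$ is merely admissible for $\gamma_1$, the right one symmetrically for $\gamma_2$. The middle inequality is the pointwise estimate $\gamma_1\langle G_1,\tilde\rho_2(t)\rangle \le \gamma_2\langle G_1,\tilde\rho_2(t)\rangle$, valid because $\gamma_2 > \gamma_1$ and because $\langle G_1,\rho(t)\rangle = \rho_{11}(t) \ge 0$ along any trajectory (the $(1,1)$ entry of a positive semidefinite matrix); integrating in $t$ preserves the inequality. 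This establishes $J_{\gamma_1} \le J_{\gamma_2}$, i.e. the optimal cost is nondecreasing in $\gamma_0$.

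For the linear growth bound I would use the final equality in (\ref{inc}), namely
$$
J_{\gamma_2} \le J_{\gamma_1} + (\gamma_2-\gamma_1)\int_0^1 \langle G_1,\tilde\rho_1(t)\rangle\,dt,
$$
and then bound the integral by noting that $\langle G_1,\tilde\rho_1(t)\rangle = (\tilde\rho_1(t))_{11}$ is at most the largest eigenvalue of $\tilde\rho_1(t)$. Since $\tilde\rho_1(t)$ evolves by the isospectral flow (\ref{refsys2}), its eigenvalues are constant in $t$ and equal those of $\rho_0$ in (\ref{rho0}); under the standing assumption $a > 1-a$ the largest is $a$. Hence $\int_0^1\langle G_1,\tilde\rho_1\rangle\,dt \le a$, giving $J_{\gamma_2}\le J_{\gamma_1}+a(\gamma_2-\gamma_1)$, which is (\ref{crescita}). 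Combining the two halves yields the proposition.

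There is no real obstacle here — the argument is a standard envelope/competitor comparison — but the one point that deserves a sentence of care is the tacit claim that an optimal pair \emph{exists} for each $\gamma_0\ge 0$, since the inequalities are phrased in terms of $(U_i,\tilde\rho_i)$; if one prefers to avoid invoking existence, the same conclusion can be obtained by taking infima over admissible pairs on both sides and using that the competitor estimates hold pair-by-pair, so that $J_{\gamma_0}$ as a function of $\gamma_0$ is an infimum of affine (hence concave, nondecreasing) functions and in particular nondecreasing with the stated linear modulus of growth.
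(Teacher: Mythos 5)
Your proposal is correct and follows essentially the same route as the paper: the three-term comparison of optimal pairs as mutual competitors yielding (\ref{inc}), followed by bounding $\int_0^1\langle G_1,\tilde\rho_1\rangle\,dt$ by the largest eigenvalue $a$ of the isospectral trajectory to get (\ref{crescita}). Your added remarks on the nonnegativity of $\langle G_1,\rho(t)\rangle$ justifying the middle inequality and on existence of minimizers (or replacing them by infima) are sensible refinements of the same argument, not a different one.
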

A situation where the optimal cost and the optimal control and parameters can be calculated analytically is when $\gamma_0=0$. In this case there is no penalty on the occupancy and the optimal control problem is a {\it minimum energy steering problem}. The solution of the equations  (\ref{aux1}) (\ref{aux2}) for $\gamma_0=0$ can be written explicitly, defining $A:=h_9 iG_9$, as 
\begin{equation}\label{explicexpre}
U(t)=e^{At} P e^{-At}, \quad e^{At}e^{(-A+P)t}\rho_0 e^{(A-P)t} e^{-At}, 
\end{equation}
as it can be directly verified (cf. \cite{JKP}). In this case the norm of $U$ is constant in time and equal to $\|P\|$. Therefore the cost (\ref{occupacost2}) is $J_0=\frac{1}{2}\|P\|^2$. From the explicit expression (\ref{explicexpre}) one can also obtain the integral $\int_0^1 \langle G_1,  \tilde \rho_1\rangle dt$ to be used in finding the bounds in (\ref{inc}) for a given $\gamma_0$. The problem of finding the parameters  $A$ and $P$ to reach the final condition $\rho_1$ at time $T=1$, i.e., $ e^{A}e^{-A+P}\rho_0 e^{A-P} e^{-A}=\rho_1$, with minimum cost, i.e., minimum $\|P\|$ can be solved in two ways. One can `lift' the problem to the corresponding problem on the Lie transformation group $SU(3)$ and apply the results for minimum energy transfer for this problem which have been well developed in the geometric control theory literature (cf.  \cite{conBenj}, \cite{BoscaKP}, \cite{conben}). In particular, one can characterize the set ${\cal S}$ of matrices $X$ in $SU(3)$ which perform the state transfer $X\rho_0X^\dagger =\rho_1$ and then use the results of \cite{conben}) which give the minimum cost as a function of $X$ to find the minimum in the set ${\cal S}$. Once such a minimum is found,  the results of  \cite{conBenj}, \cite{conben}, also give a method to find the matrices $A$ and $P$. Such a method does not take into account that we already know the form of $A$ which is in \cite{conBenj}, \cite{conben}, assumed to be a general block diagonal matrix. An alternative approach consists of adapting the {\it methods} of  \cite{conBenj}, \cite{conben} to our situation (problem for density matrix transfer with known form for the matrix $A$) rather than applying such results after lifting. We shall give an example of this in the next section where we shall calculate the optimal cost  for a particular state transfer (the more technical part of the treatment is presented in Appendix \ref{ZOC1})

\subsection{Bounds on the norm of the control $\|P\|$ }\label{BNC}

Every upper $B_U$ and lower $B_L$ bound for the cost give an upper and lower bound for the norm of the control $\|U(0)\|^2=\|P\|^2$ at time $t=0$. To see this, assume that, as for example in (\ref{inc}), (\ref{crescita}),  
$$
B_L \leq J_{\gamma_0}:= \int_0^1 \frac{1}{2} \|U(t)\|^2 +\gamma_0 \langle G_1, \rho \rangle dt \leq B_U .
$$
Then using the constancy of the control  Hamiltonian in formula (\ref{cHcon}) we get $\frac{1}{2}\|U(t)\|^2=\frac{1}{2}\|U(0)\|^2+\gamma_0\langle G_1, \rho \rangle$, which gives 
$$
B_L \leq \frac{1}{2} \|U(0)\|^2+\int_0^12 \gamma_0 \langle G_1, \rho \rangle dt \leq B_U. 
$$
This gives, using $\langle G_1, \rho\rangle \leq a$ (in (\ref{rho0})), 
$
B_L-2\gamma_0 a \leq \frac{1}{2} \|U(0)\|^2 \leq B_U.
$
Thus we get
\begin{equation}\label{maximin}
\sqrt{\max\{ 0, {2B_L-4\gamma_0 a}\}} \leq \|P\| \leq \sqrt{2} \sqrt{B_U}. 
\end{equation}

\subsection{Symmetries and elimination of redundancy}\label{SRed}

 Consider $D$ a diagonal matrix of the form $D:=\begin{pmatrix} e^{2i\phi} & 0 & 0 \cr 0 & e^{-i\phi} & 0 \cr 0 & 0 & e^{-i\phi} \end{pmatrix}$. If $(U,\rho)$ is a pair of optimal control and trajectory, then 
 the pair $\left( DUD^\dagger, D\rho D^\dagger \right)$ also has the same initial and final conditions for $\rho$ and the same cost, and therefore it is also optimal.   It satisfies the 
 equations (\ref{aux1}), (\ref{aux2}) but with initial condition $DPD^\dagger$ instead of $P$.  Therefore we can set one of the phases  of the  parameters in $P$,  ($\alpha$, $\beta$) in (\ref{matrixP}). For example we can take $\alpha$ real and nonnegative, without loss of generality, by an appropriate choice of $D$. This gives $u_4(0)=0$. Furthermore, we have the following. 

 \begin{proposition}\label{coniugati}
If $(U,\rho)$ is a pair satisfying equations (\ref{aux1}) (\ref{aux2}) of control-trajectory driving from $\rho_0$ to $\rho_1$, then $(\bar U,\bar \rho)$ is a pair of control-trajectory satisfying equations (\ref{aux1}) (\ref{aux2})  driving from $\bar \rho_0$ to $\bar \rho_1$ with the same cost. 
 \end{proposition}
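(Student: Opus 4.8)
The plan is to verify directly that taking the entrywise complex conjugate of every matrix appearing in the problem preserves all the defining relations: the dynamical equations (\ref{aux1}), (\ref{aux2}), the form (\ref{matrixP}) of the initial control, and the cost (\ref{occupacost2}). The key observation to exploit is how conjugation interacts with the basis $\{G_j\}$ in (\ref{base}): since $G_1,\dots,G_4,G_6,G_8$ are real matrices while $G_5,G_7,G_9$ are purely imaginary, conjugation acts as $\overline{G_j}=G_j$ for $j\in\{1,2,3,4,6,8\}$ and $\overline{G_j}=-G_j$ for $j\in\{5,7,9\}$. Equivalently, conjugation is the linear involution on $iu(3)$ that flips the sign of the $G_5,G_7,G_9$ components and fixes the rest; and it is an anti-homomorphism for the commutator in the sense that $\overline{[A,B]}=[\bar A,\bar B]$ (conjugation is a ring homomorphism on matrices, commuting with products, hence with commutators).

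First I would record the elementary facts: $\overline{[A,B]}=[\bar A,\bar B]$, $\overline{[U,\rho]}=[\bar U,\bar\rho]$, and $\overline{iG_9}=-iG_9$ while $\overline{G_1}=G_1$. Conjugating (\ref{aux1}) gives $\dot{\bar\rho}=[\bar U,\bar\rho]$ with $\bar\rho(0)=\bar\rho_0$, which is exactly (\ref{aux1}) for the conjugated data. Conjugating (\ref{aux2}) gives $\dot{\bar U}=h_9\,\overline{[iG_9,U]}+\gamma_0\,\overline{[G_1,\rho]} = h_9[\,\overline{iG_9},\bar U\,]+\gamma_0[G_1,\bar\rho] = h_9[-iG_9,\bar U]+\gamma_0[G_1,\bar\rho] = (-h_9)[iG_9,\bar U]+\gamma_0[G_1,\bar\rho]$. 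So $(\bar U,\bar\rho)$ satisfies (\ref{aux1}), (\ref{aux2}) with the same $\gamma_0$ but with the parameter $h_9$ replaced by $-h_9$; since $h_9$ ranges over all of $\mathbb{R}$ this is still an instance of the system in the statement. Next I would check the initial condition: $\bar P = \overline{U(0)}$ has the form (\ref{matrixP}) with $(\alpha,\beta)$ replaced by $(\bar\alpha,\bar\beta)$, which is again of the required form, so $\bar U$ is an admissible control initialization. Then the endpoint condition: if $\rho(T)=\rho_1$ then $\bar\rho(T)=\bar\rho_1$, and in the case of interest $\rho_1$ in (\ref{realrho}) is real so $\bar\rho_1=\rho_1$ (but the statement is phrased for a general $\rho_1$, for which $\bar\rho(T)=\bar\rho_1$ is just the stated endpoint). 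Finally, the cost: $\|\bar U\|^2 = \mathrm{Tr}(\bar U\bar U^\dagger) = \overline{\mathrm{Tr}(U U^\dagger)} = \|U\|^2$ since the trace of a matrix and of its conjugate are complex conjugates and $\|U\|^2$ is real, and likewise $\langle G_1,\bar\rho\rangle = \mathrm{Tr}(G_1\bar\rho) = \overline{\mathrm{Tr}(G_1\rho)} = \langle G_1,\rho\rangle$ because $G_1$ is real and the quantity is real. Hence the integrand in (\ref{occupacost2}) is unchanged pointwise, so the total cost is the same.

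There is essentially no deep obstacle here; the only points requiring a little care are (i) noticing that conjugation flips the sign of $h_9$ rather than leaving (\ref{aux2}) literally invariant — this is harmless because the statement only asserts that $(\bar U,\bar\rho)$ satisfies equations of the same form, and $-h_9\in\mathbb{R}$ — and (ii) being careful that $\overline{U^\dagger}=U^T=-\bar U$ for skew-Hermitian $U$, so that $\overline{UU^\dagger}=\bar U\,\overline{U^\dagger}=\bar U(-\bar U)$... which one must reconcile with $\|\bar U\|^2=\mathrm{Tr}(\bar U\bar U^\dagger)$; the cleanest route is simply $\mathrm{Tr}(\bar A\bar B)=\overline{\mathrm{Tr}(AB)}$ applied with $A=U$, $B=U^\dagger$, avoiding any sign bookkeeping. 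I would present the argument in that order — conjugation identities, then (\ref{aux1}), then (\ref{aux2}), then $P$, then endpoints, then cost — each step being a one-line verification.
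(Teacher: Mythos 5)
Your overall strategy --- entrywise conjugation of (\ref{aux1}) and (\ref{aux2}) together with the observation that the (real) cost integrand is conjugation-invariant --- is exactly the paper's proof. However, there is a sign error at the one step the paper actually singles out. You correctly note that $G_9$ is purely imaginary, so $\overline{G_9}=-G_9$; but then $\overline{iG_9}=\overline{i}\,\overline{G_9}=(-i)(-G_9)=+iG_9$, i.e., $iG_9$ is a \emph{real} matrix (its nonzero entries are $\mp 1/\sqrt{2}$). Consequently, conjugating (\ref{aux2}) gives $\dot{\bar U}=h_9[iG_9,\bar U]+\gamma_0[G_1,\bar\rho]$ with the \emph{same} $h_9$: the equation is literally invariant, and no replacement $h_9\mapsto -h_9$ occurs. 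This is precisely the fact the paper invokes (``$iG_9$ is real''). The error is non-fatal --- you correctly observe that flipping the sign of the real parameter $h_9$ would still leave you within the family of systems in the statement --- but the ``point requiring care (i)'' that you flag is an artifact of the miscalculation rather than a genuine subtlety. The remaining steps (the form of $\bar P$, the endpoint conditions, and the invariance of $\|U\|^2$ and $\langle G_1,\rho\rangle$ under conjugation) are correct and match the paper's one-line justification.
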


 \begin{proof}
Take the conjugate of equations (\ref{aux1}) and (\ref{aux2}) and using the fact that $iG_9$ is real it follows that if $(U,\rho)$ satisfy these equations so do $(\bar U,\bar \rho)$. The  initial and final conditions  are the conjugate. Using the expression for the cost (\ref{occupacost2}) it immediately follows that the costs are the same. 
 \end{proof}

Since we have reduced ourselves, without loss of generality,  to $\rho_0$ and $\rho_1$ that are {\it real},  if $U$ is an optimal control so is $\bar U$. This allows us to restrict even further the possible initial conditions for $U(0)$. For instance we will take $u_6(0)\geq 0$. 
 
 Summarizing, there will be no loss of generality in assuming that the matrix $P$ in (\ref{matrixP}) is of the form 
 \begin{equation}\label{newmatrixP}
 P=\begin{pmatrix} 0 & p & b+id \cr -p & 0 & 0 \cr -b +id & 0 & 0 \end{pmatrix}, 
 \end{equation}
 with $b,$ $p,$ $d$ real and $p\geq 0$, $d \geq 0$.

 \subsection{Bounds on the values of $h_9$} 
 
 The bounds on the value of $h_9$ in (\ref{aux2}) are based on the intuitive  idea that as $h_9$ becomes large in (\ref{aux2}),  the second term in the right hand side of the equation will become unimportant and the first term will give for $U$ an highly oscillatory matrix which has no effect on $\rho$. Thus $\rho$ will not reach the desired final value of $\rho_1$ in (\ref{realrho}). This is captured by the 
 following estimates which are reminiscent of the method of averaging  \cite{JimM}.

 We start by introducing  a time varying change of coordinates for $U$ in (\ref{aux1}), (\ref{aux2}) by defining the matrix function $\tilde U=e^{-h_9iG_9t} U e^{h_9iG_9t}$, which gives,  from (\ref{aux1}), (\ref{aux2}),  the differential equations for $\rho$ and $\tilde U$, 
 \begin{eqnarray}
\dot \rho=[e^{ih_9G_9 t}\tilde U e^{-ih_9G_9 t},\rho], \quad \rho(0)=\rho_0,  \label{aux1A}\\
\frac{d}{dt} \tilde U=\gamma_0[G_1,e^{-ih_9G_9t} \rho e^{ih_9G_9t}], \quad \tilde U(0)=P.  \label{aux2A} 
\end{eqnarray}
Let us introduce a time scaling by assuming $h_9>0$ (the case $h_9=0$ will be included in the bound we eventually obtain and the case $h_9<0$ only requires small modifications and leads to the same bound for $|h_9|$). Define  $\tau=|h_9|t$, and 
$\hat U(\tau):=\tilde U\left( \frac{\tau}{h_9} \right)$, $\hat \rho=\rho\left( \frac{\tau}{h_9}\right)$. Also set $\epsilon:=\frac{1}{h_9}$. We obtain from (\ref{aux1A}) and (\ref{aux2A}), 
 \begin{eqnarray}
\frac{d}{dt} \hat \rho=\epsilon [e^{iG_9 \tau}\hat U e^{-G_9 \tau },\hat \rho], \quad \hat \rho(0)=\rho_0, \label{aux1B}\\
\frac{d}{dt} \hat  U=\epsilon \gamma_0[G_1,e^{-iG_9\tau} \hat  \rho e^{iG_9\tau }], \quad \hat U(0)=P,   \label{aux2U} 
\end{eqnarray}
over a time interval $[0, h_9]$.  Integrating (\ref{aux1B}),  
we get $\hat \rho(\tau)-\rho_0=\epsilon \int_0^\tau [e^{iG_9 s} \hat U(s) e^{-iG_9s}, \hat \rho(s)]ds$, 
while, integrating (\ref{aux2U}),   we get $\hat U(s)-P=\epsilon \int_0^s[ \epsilon \gamma_0 [G_1, e^{-iG_9 r}\hat \rho(r) e^{iG_9 r}]dr$. Combining these two expressions, we obtain
\begin{widetext}
    $$
\hat \rho(\tau) -\rho_0=\epsilon \int_0^\tau [e^{iG_9 s} P e^{-iG_9s}, \hat \rho(s)]ds+
\epsilon^2 \gamma_0 \int_0^\tau \int_0^s\left[ e^{iG_9 s}[G_1, e^{-iG_9 r} \hat \rho(r) e^{iG_9r}] e^{-iG_9 s}, \hat \rho(s) \right] dr ds.
$$
\end{widetext}

This, using the fact that $G_1$ commutes with $G_9$,  can be written more compactly as 
\begin{widetext}
    \begin{equation}\label{thetwointegrals}
\hat \rho(\tau)-\rho_0=\epsilon  \int_0^\tau [e^{iG_9 s} P e^{-iG_9s}, \hat \rho(s)]ds+\epsilon^2 \gamma_0 \int_0^\tau \int_0^s \left[ [G_1, e^{iG_9(s-r)} \hat \rho(r) e^{iG_9(r-s)}], \hat \rho(s) \right] dr ds. 
\end{equation}
\end{widetext}

Let us study the two integrals in (\ref{thetwointegrals}) (and their norms) separately. 

\begin{enumerate}
\item {\bf First integral}

We write the first integral on the right hand side of (\ref{thetwointegrals})  as 
$$
\int_0^\tau[e^{iG_9 s} P e^{-iG_9s}, \rho_0] ds+\int_0^\tau[e^{iG_9 s} P e^{-iG_9s}, \hat \rho(s)- \rho_0] ds. 
$$
The first term is a periodic function of $\tau$ with period $2\sqrt{2}\pi$ that we can calculate explicitly. Using $P$ in (\ref{newmatrixP}) and $\rho_0$ in (\ref{rho0}), we can obtain an upper bound for the norm of this integral which is proved in the Appendix \ref{upboun} ($a$ is defined in (\ref{rho0}))
\begin{equation}\label{UB3}
\left\|  \int_0^\tau[e^{iG_9 s} P e^{-iG_9s}, \rho_0] ds  \right\| \leq 2\sqrt{2}a\|P\|. 
\end{equation}
For the second term, we obtain the bound, 
\begin{widetext}
    $$
\left \| \int_0^\tau [e^{iG_9 s} P e^{-iG_9s}, \hat \rho(s)- \rho_0] ds \right\| \leq \int_0^\tau \| [e^{iG_9 s} P e^{-iG_9s}, \hat \rho(s)- \rho_0]\| ds\leq \sqrt{2}\|P\|\int_0^\tau \|\hat \rho(s)-\rho_0\|ds. 
$$
\end{widetext}

Here we used the B\"ottcher and Wenzel bound for the Frobenius norm of a commutator $[A,B]$. i.e., $\|[A,B]\| \leq \sqrt{2} \|A\|\|B\|$ (cf.  \cite{BW} and the references therein).

\item {\bf Second Integral}

We write $\int_0^\tau \int_0^s \left[ [G_1, e^{iG_9(s-r)} \hat \rho(r) e^{iG_9(r-s)}], \hat \rho(s) \right] dr ds. $ as 
\begin{widetext}
\begin{equation*}
    \int_0^\tau \int_0^s \left[ [G_1, e^{iG_9(s-r)} (\hat \rho(r)-\rho_0) e^{iG_9(r-s)}], \hat \rho(s) \right] dr ds +\int_0^\tau \int_0^s \left[ [G_1, e^{iG_9(s-r)}  \rho_0 e^{iG_9(r-s)}], \hat \rho(s) \right] dr ds. 
\end{equation*}
\end{widetext}
 and we notice that the second integral is zero because, since $G_1$ commutes with   $G_9$, the inner commutator is 
 $  e^{iG_9(s-r)}  [G_1, \rho_0] e^{iG_9(r-s)}=0$ since $G_1$ commutes with $\rho_0$ (they are both diagonal). Therefore, we obtain 

 \begin{widetext}
      $$
 \left\| \int_0^\tau \int_0^s \left[ [G_1, e^{iG_9(s-r)} \hat \rho(r) e^{iG_9(r-s)}], \hat \rho(s) \right] dr ds.  \right\| \leq 
 $$
 $$
 \int_0^\tau \int_0^s \left \|  \left[ [G_1, e^{iG_9(s-r)}  (\hat \rho(r)-\rho_0) e^{iG_9(r-s)}], \hat \rho(s) \right] \right\| dr ds 
 \leq \
 $$
 $$
 \int_0^\tau \int_0^s \sqrt{2} \sqrt{a^2+(1-a)^2}    \left\| \left[ G_1, e^{iG_9(s-r)}(\hat \rho(r)-\rho_0)e^{iG_0(r-s)} \right] \right\| dr ds \leq 
 $$
$$
\int_0^\tau \int_0^s 2  \sqrt{a^2+(1-a)^2}   \|\hat \rho(r)-\rho_0\| dr ds=
 2  \sqrt{a^2+(1-a)^2} \int_0^\tau (\tau-r) \|\hat \rho(r) - \rho_0\| dr
$$
 \end{widetext}

Here we used the following: In the second inequality, we used again B\"ottcher-Wenzel inequality and the fact that the Frobenius norm of $\hat \rho(s)$ is equal to the Frobenious norm of $\rho_0$, which is $\sqrt{a^2+(1-a)^2}$. In the third inequality, we used again the B\"ottcher-Wenzel inequality this time with the $\|G_1\|=1$ and the fact that $G_1$ commutes with $G_9$. 

\end{enumerate}

Now, putting everything together in (\ref{thetwointegrals}), we obtain, 
\begin{widetext}
    $$
\|\hat \rho(\tau)-\hat \rho_0 \|\leq 2\epsilon \sqrt{2} a \|P\|+\int_0^\tau\left(\epsilon \sqrt{2}\|P\|+2 \epsilon^2 \gamma_0 \sqrt{a^2+(1-a)^2}(\tau-r)\right) \|\hat \rho(r)-\rho_0\|dr.
$$
\end{widetext}

We now apply Gr\"onwall's inequality (see, e.g., \cite{JimM}) to obtain 
$$
\|\hat \rho(\tau)-\rho_0\| \leq 2 \epsilon \sqrt{2}a \|P\| e^{\epsilon \sqrt{2}\|P\|\tau+\epsilon^2\gamma_0 \sqrt{a^2+(1-a)^2}\tau^2}. 
$$
If the control steers to the desired final condtion $\rho_1$ in (\ref{realrho}),  then applying the above inequality when $\tau={h_9}$, we obtain 
$$
\| \rho_1-\rho_0\|\leq \frac{2\sqrt{2}}{h_9}a\|P\|e^{\sqrt{2} \|P\|+\gamma_0\sqrt{a^2+(1-a)^2}}.
$$ 
Repeating similar arguments for the case $h_9<0$, we arrive to the general bound for $|h_9|$
\begin{equation}\label{boundh9}
|h_9| \leq \frac{2 \sqrt{2}a\|P\|}{\|\rho_1-\rho_0\|} e^{\sqrt{2}\|P\|+\gamma_0\sqrt{a^2+(1-a)^2}}. 
\end{equation}
Since we have an upper bound for $\|P\|$, this gives a theoretical upper bound for $|h_9|$. Notice that in a numerical search we can use the bound in two ways. We can simply search on a grid for all the allowed values of $\|P\|$ and for the values of $|h_9|$  satisfying the bound (\ref{boundh9})  where $\|P\|$ is replaced by its upper bound. A more efficient approach is to insert the loop for $h_9$ inside the loop for $\|P\|$ and use the {\it current} value of $\|P\|$ in (\ref{boundh9}). 

The following section present a numerical example where we implement the above  theory.

\section{A case study: An Hadamard-like transformation on the density matrix}\label{CS}

 We assume we want to perform a rotation of $\theta=\frac{\pi}{4}$, that is an Hadamard-type  gate,  on the lowest two energy levels  of the initial state $\rho_0$ in (\ref{rho0})  with $a=\frac{2}{3}$, that is, we want to  perform, in time $T=1$,  the state transfer 
 \begin{equation}\label{statetransf}
\rho_0:= \begin{pmatrix}0 & 0 & 0 \cr 0 & \frac{2}{3} & 0 \cr 0 & 0 & \frac{1}{3} \end{pmatrix} \rightarrow \rho_1=\begin{pmatrix} 0 & 0 & 0 \cr 0 & \frac{1}{2} & -\frac{1}{6} \cr 0 & -\frac{1}{6} & \frac{1}{2} \end{pmatrix}. 
 \end{equation}
 Furthermore, we assume in the cost (\ref{occupacost}),(\ref{occupacost2}),  $\gamma_0=1$, that is, we give  double  emphasis on the occupancy as on the energy of the control. We know that the solution pair (trajectory-control) has to satisfy the differential equations (\ref{aux1})-(\ref{aux2}), for appropriate parameter $h_9$ and matrix $P$ of the form (\ref{newmatrixP}). In order to obtain bounds for the norm of the matrix $P$ and the parameter $h_9$, we solve first the problem for the zero occupancy case $\gamma_0=0$. This is done in the following theorem whose proof, given in  Appendix \ref{ZOC1},  is obtained by adapting techniques of \cite{conBenj}, \cite{conben}. 
 
 \begin{theorem}\label{Final}
 A  control $U=U(t)$ achieving  the state transfer (\ref{statetransf}) and minimizing the cost (\ref{occupacost2}) with $\gamma_0=0$, is obtained  by the solution of the system 
 (\ref{aux1})-(\ref{aux2}) with  
 $$
 P=\frac{\sqrt{3} \pi}{2 \sqrt{2}} \begin{pmatrix} 0 & 1 & i \cr -1 & 0 & 0 \cr i & 0 & 0 \end{pmatrix}, 
 $$
 and $h_9=\sqrt{2}\pi$. The minimum cost is $J=\frac{3 \pi^2}{4}$. By defining $A:=ih_9G_9$, The control and trajectory are given by (\ref{explicexpre}).
 \end{theorem}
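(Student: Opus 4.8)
The plan is to reduce the problem, with $\gamma_0=0$, to a finite-dimensional minimization on $SU(3)$ and to solve the latter by adapting the minimum-energy ($K$--$P$) analysis of \cite{conBenj,conben}. Setting $\gamma_0=0$ in (\ref{aux1})--(\ref{aux2}) and writing $A:=ih_9G_9$, the extremal pair is given explicitly by (\ref{explicexpre}); along it the control norm is constant, $\|U(t)\|\equiv\|P\|$, so the cost (\ref{occupacost2}) equals $\tfrac12\|P\|^2$. Lifting through $\dot X=UX$, $X(0)=I$, one has $X(t)=e^{At}e^{(P-A)t}$ and $\rho(t)=X(t)\rho_0X(t)^\dagger$. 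Since an optimal control exists and must satisfy these extremal equations, the theorem reduces to: \emph{minimize $\|P\|$ over matrices $P$ of the form (\ref{newmatrixP}) and over $h_9\in\mathbb R$, subject to $X(1)=e^Ae^{P-A}$ lying in} $\mathcal S:=\{X\in SU(3)\mid X\rho_0X^\dagger=\rho_1\}$, with $\rho_0,\rho_1$ as in (\ref{statetransf}).

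The next step is to describe $\mathcal S$ explicitly. For $a=\tfrac{2}{3}$ the eigenvalues $0,a,1-a$ of $\rho_0$ (and of $\rho_1$) are pairwise distinct, so every $X\in\mathcal S$ maps $\mathrm{span}\,|1\rangle$ onto itself and is therefore block diagonal, $X=\mathrm{diag}(e^{i\phi},\hat X)$ with $\hat X\in U(2)$ a unitary carrying $\hat\rho_0=\mathrm{diag}(\tfrac{2}{3},\tfrac{1}{3})$ to the lower block $\hat\rho_1$ of $\rho_1$; one checks that $\hat X$ equals $\tfrac{1}{\sqrt2}\bigl(\begin{smallmatrix}1&1\\-1&1\end{smallmatrix}\bigr)$ up to left and right diagonal phases, so that $\mathcal S$ is a single two-parameter coset of the maximal torus. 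Since $A=ih_9G_9$ generates a rotation of the $2$--$3$ plane, $e^A=\mathrm{diag}(1,R)$ with $R$ the planar rotation by $h_9/\sqrt2$; and because $P\perp A$ in the Frobenius metric, $\|P\|^2=\|P-A\|^2-h_9^2=-\mathrm{Tr}\bigl((P-A)^2\bigr)-h_9^2$. Hence the cost is determined by the spectrum of $e^{P-A}=e^{-A}X(1)$, together with the branch of the matrix logarithm selected by the constraint that the $\mathfrak k$-part of $P-A$ be exactly $-A$.

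The core of the proof is the minimization itself. One adapts the sub-Riemannian results of \cite{conBenj,conben} for the Cartan pair $\bigl(su(3),\mathfrak s(\mathfrak u(1)\oplus\mathfrak u(2))\bigr)$ with $\mathfrak p=\mathrm{span}\{iG_4,\dots,iG_7\}$, but keeping the direction $iG_9$ of $A$ fixed---as already forced by the transversality analysis of Section \ref{SysT}---instead of a general block-diagonal $A$ as in \cite{conBenj,conben}. For each candidate target $X_f\in\mathcal S$ and each $h_9$ the cost $\tfrac12\|P\|^2$ is read off from the eigenvalues of $e^{-A}X_f$ as above; minimizing over the two torus parameters of $\mathcal S$ and over $h_9$ one finds the optimum at $X_f=\mathrm{diag}\bigl(i,\tfrac{e^{-i\pi/4}}{\sqrt2}\bigl(\begin{smallmatrix}1&1\\-1&1\end{smallmatrix}\bigr)\bigr)$ and $h_9=\sqrt2\,\pi$, for which $P-A$ has spectrum $\{\,i\pi,\ i\pi/2,\ -3i\pi/2\,\}$; then $\|P-A\|^2=\tfrac{7\pi^2}{2}$, $h_9^2=2\pi^2$, so $\|P\|^2=\tfrac{3\pi^2}{2}$ and $J=\tfrac12\|P\|^2=\tfrac{3\pi^2}{4}$, with $P$ recovering the matrix in the statement (that is $b=0$, $p=d=\tfrac{\sqrt3\,\pi}{2\sqrt2}$ in (\ref{newmatrixP})). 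Finally I would verify admissibility directly: using $e^A=\mathrm{diag}(1,-1,-1)$ and the known spectrum of $P-A$ one evaluates $e^Ae^{P-A}$ and checks $e^Ae^{P-A}\rho_0e^{A-P}e^{-A}=\rho_1$.

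The main obstacle is this minimization step. The map $X_f\mapsto(\text{minimum cost})$ is multivalued, being built from a matrix logarithm, so one must enumerate the finitely many relevant windings of the eigenangles compatible with the constraint on the $\mathfrak k$-component of $P-A$ and compare their costs, rather than merely solving a critical-point equation; only after this enumeration is one certain that $J=\tfrac{3\pi^2}{4}$ is the global minimum over $\mathcal S\times\mathbb R$. This bookkeeping, together with the transplantation of the $K$--$P$ construction to the density-matrix setting with $A$ constrained to $\mathbb R\,iG_9$, is what is carried out in Appendix \ref{ZOC1}.
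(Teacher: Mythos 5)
Your setup matches the paper's: for $\gamma_0=0$ the extremal is (\ref{explicexpre}), the cost is $\tfrac12\|P\|^2$, and the problem becomes minimizing $\|P\|$ over matrices $P$ of the form (\ref{newmatrixP}) and $h_9\in\mathbb R$ subject to $e^Ae^{-A+P}\rho_0e^{A-P}e^{-A}=\rho_1$. Your identity $\|P\|^2=\|P-A\|^2-h_9^2$ and the arithmetic at the claimed optimum (the spectrum $\{i\pi,\,i\pi/2,\,-3i\pi/2\}$ of $P-A$, $\|P-A\|^2=\tfrac{7\pi^2}{2}$, $J=\tfrac{3\pi^2}{4}$, and the identification of $X_f$) all check out against the paper's answer.

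The gap is that the global minimization --- which is the entire content of the theorem --- is asserted rather than proved. You write that ``minimizing over the two torus parameters of $\mathcal S$ and over $h_9$ one finds the optimum at \dots'' and then defer the ``bookkeeping'' to Appendix \ref{ZOC1}; but that bookkeeping \emph{is} the proof. Moreover, as described, your enumeration is not finite: $(X_f,h_9)$ ranges over a three-dimensional continuum, and for each such pair the branches of $\log(e^{-A}X_f)$ whose $\mathfrak k$-part equals $-h_9\,iG_9$ (for the \emph{same} $h_9$, a self-consistency condition) and whose $\mathfrak p$-part lies in $\mathrm{span}\{iG_4,\dots,iG_7\}$ form a set you have not characterized. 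What makes the search tractable in the paper is a structural fact your proposal lacks: since $e^{-A+P}$ must be block diagonal and commutes with $-A+P$, separating the block-diagonal and block-antidiagonal parts of the conjugation identity shows that $e^{-A+P}$ commutes with $A$ and $P$ \emph{separately} (Lemma \ref{scalar}), which forces $e^{-A+P}$ to be either a scalar matrix or of the degenerate form arising when $|\sin 2\theta|=|\sin\psi|=1$. Only then does the problem collapse into two finite integer optimizations --- the winding numbers $k,m,l$ subject to (\ref{ineq1})--(\ref{ineq2}) in the scalar case, and conditions (\ref{condir}), (\ref{condia}), (\ref{condiabis}) in the other --- whose respective minima $\tfrac{15\pi^2}{16}$ and $\tfrac{3\pi^2}{4}$ are then compared. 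Without this lemma, or an equivalent finiteness argument, your claim that $\tfrac{3\pi^2}{4}$ is the global minimum over $\mathcal S\times\mathbb R$ is unsupported.
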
 
 
 In the following we will need to approximate the bounds that we obtain for our parameters as in Section \ref{Red}. In every approximation we will choose a larger number when approximating an upper bound and a smaller number when approximating a lower bound. For example if our numerical calculations give $3.1467$ for an upper bound we may approximate it as $3.15$. Also notice that the optimal in Theorem \ref{Final} and in general for any $\gamma_0$ is far from being unique as highlighted in the subsection on the elimination of redundancy and symmetries  \ref{SRed}. It would be interesting to use this non uniqueness of the control to select the optimal control which has further desirable features, e.g., robustness to parameter and design imperfections. However this is beyond the scope of this paper.  
 
 We now apply the bounds on the control norm $\|P\|$ of Subsection \ref{BNC} to our numerical example ($\gamma_0=1$ and $a=\frac{2}{3}$). In formula (\ref{maximin}) the bound $B_L$ for the cost is given by the cost in the zero occupancy case of Theorem \ref{Final}, i.e., $B_L=\frac{3\pi^2}{4}$, To obtain the upper bound, $B_U$, we apply formula (\ref{inc}) (with $\gamma_2=\gamma_0=1$ and $\gamma_1=0$)  and we calculate the integral 
 \begin{align*}
     \int_0^1\langle G_1, \tilde \rho_1(t) \rangle dt&=\int_0^1\left \langle G_1, e^{At} e^{(-A+P)t} \rho_0 e^{(A-P)t} e^{-At}    \right \rangle dt\\
     &=\int_0^1 \left\langle G_1,  e^{(-A+P)t} \rho_0 e^{(A-P)t}  \right \rangle dt, 
 \end{align*}
 with the matrices $A$ and $P$ described in Theorem \ref{Final}. 
 This gives the upper bound $B_U$ for the cost $B_U=\frac{3\pi^2}{4}+\int_0^1  \left \langle G_1,  e^{(-A+P)t} \rho_0 e^{(A-P)t}  \right \rangle dt$, which gives $B_U\approx {6.4692}$. Plugging this in (\ref{maximin}), we obtain the following bounds for $\|P\|$ 
 \begin{equation}\label{numbound}
 3.4839 \leq \|P\| \leq 3.897. 
 \end{equation}
 As for $h_9$, we use (\ref{boundh9}) with $\gamma_0=1$, $a=\frac{2}{3}$ and notice that the right hand side is an increasing function of $\|P\|$ so that replacing for $\|P\|$ the upper bound in (\ref{numbound}) we obtain an upper bound for $|h_9|$. We also find it convenient to rescale $h_9$ with $\tilde h_9:=\frac{h_9}{\sqrt{2}}$ so that the reference value which is the value of $\tilde h_9$ for the zero occupancy case is simply $\tilde h_9=\pi$. With this, we  calculated the following bound 
 \begin{equation}\label{tilh9bound}
 |\tilde h_9| \leq 8128. 
 \end{equation}

 \vspace{0.5cm}
 
 Our numerical search was organized as follows. We first run the MATLAB simulation of system (\ref{aux1}),  (\ref{aux2}) for values of $|\tilde h_9|$ in the range $[0,80]$ 
  for all the values of $\|P\|$ in the range described in (\ref{numbound}), by using a stepsize of $\Delta_h=0.03$ for the values of $\tilde h_9$ and a stepsize approximately (upper bound minus lower bound in (\ref{numbound}) divided by $100$) $\Delta_P\approx0.004131$ for $\|P \|$. We recorded  in a vector the minimum distance square from the desired value $\rho_1$ in (\ref{statetransf}) as a function of $|\tilde h_9|$. The plot in Figure \ref{Figura1} shows the behavior such a minimum  distance square as a function $|\tilde h_9|$. Such a distance starts converging to the initial (worst case) value of $\|\rho_0-\rho_1\|^2=\frac{1}{9}$ much earlier than the theoretical bound in (\ref{tilh9bound}). We run a separate simulation for values of $|\tilde h_9|$ in the interval $[80, 8128]$ (with larger stepsizes) which confirmed  this trend and that the error square remains away from zero for larger values of $|\tilde h_9|$, and tends to the initial value of  $\frac{1}{9}$.

\begin{figure}
\centering
\includegraphics[scale=0.6]{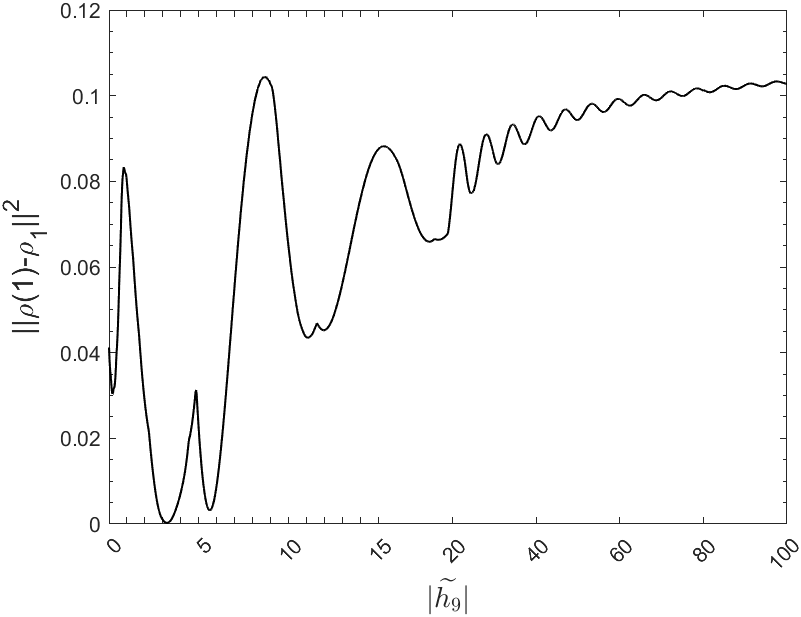}
\caption{Minimum distance square ($\|\rho(1)-\rho_1\|^2$) as a function of $|\tilde h_9|$ for $|\tilde h_9|\in [0,80]$.  }
\label{Figura1}
\end{figure}

From Figure \ref{Figura1} it appears that there are two minima for the distance square one of which is near the value of $\pi$  which was the optimal value for the zero occupancy case. 
 From simulations with smaller stepsizes it appeared  that the second minimum is always above the first one. Therefore, we focused our search on an interval around the first minimum which we took as $|h_9 | \in [3.18, 3.3]$. For the values of $| \tilde h_9|$ in this interval we run a similar simulation code obtaining the minimum value of the square error as a function 
 of $\|P\|$ with stepsizes $\Delta_h=0.01$ for $|\tilde h_9|$, $\Delta_P=0.01$ and $\Delta_{\theta_1}=\Delta_{\theta_2}=\frac{\pi}{100}$ for the angles that appear in the specification of the initial control parameters.
  We have parametrized the initial matrices $P$ for a fixed $\|P\|$ with two angles $\theta_1$ and $\theta_2$,  and we choose a finer stepsize of $\Delta_{\theta_1}=\Delta_{\theta_2}$  for these angles as well   We obtained the plot in Figure \ref{Figura2} describing the dependence of the minimum error square on $\|P\|$. From this plot it appears that $\|P\|$ has to be in an interval smaller than the original one in (\ref{numbound})  fr which we took  equal to $[3.7839,3.8539]$.

 \begin{figure}
\centering
\includegraphics[scale=0.6]{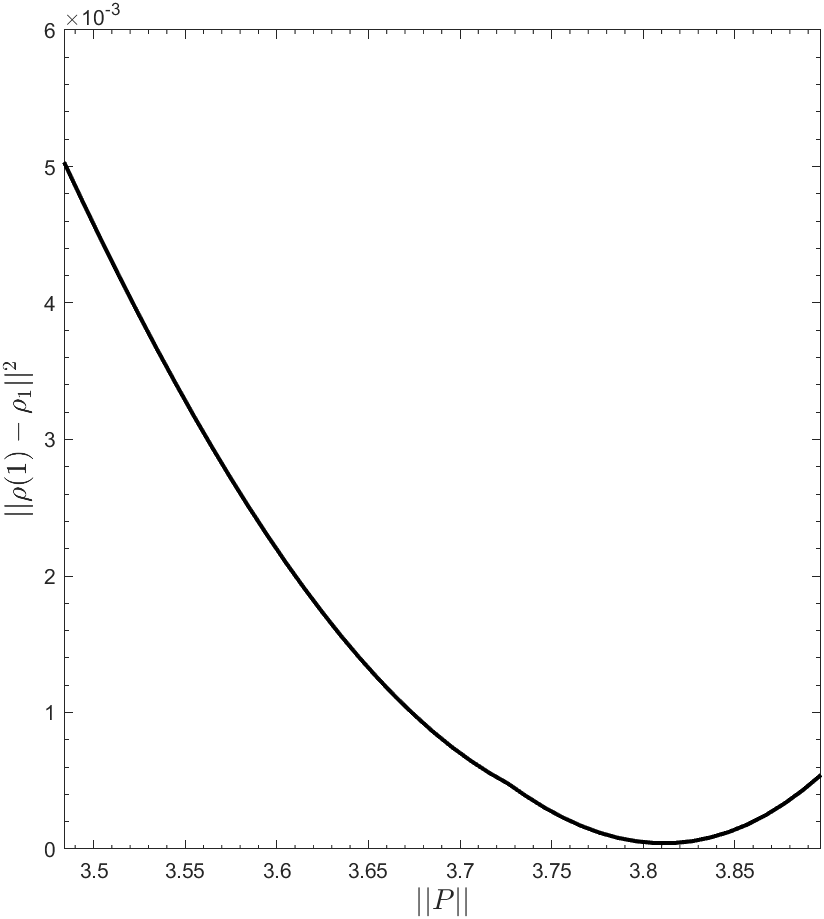}
\caption{Minimum distance square ($\|\rho(1)-\rho_1\|^2$) as a function of $\|P\|$ for $|\tilde h_9|\in [3.18,3.3]$.  }
\label{Figura2}
\end{figure}

We run a third  program for values 
of $|\tilde h_9|$ and $\|P\|$ in $[3.18,3.3]\times [3.7839,3.8539]$ , with  stepsizes $0.01$ for both $\tilde h_9$ and $\|P\|$. Within a single value of $\|P\|$ for the angles $\theta_1$ and $\theta_2$ used in the specification of $P$ the stepsize was taken $\frac{\pi}{100}$. This time we recorded  every value of the parameters which gave a final  square error below 
a threshold which we fixed at $5\times 10^{-5}$. We also recorded the square error  along with the corresponding value of the cost. This was calculated from the formula (cf. Subsection \ref{BNC} with $\gamma_0=1$)  $J=\frac{1}{2}\|P\|^2+ 2\int_0^1 \langle G_1, \rho(t)\rangle dt$). We picked  the values of the parameters which gave the minimum cost. These were: $u_4(0)=0$, $u_5(0)\approx 2.6877$, $u_6(0)approx 2.6891$, $u_7\approx 0.0845$, $\tilde h_9\approx 3.253$, with a cost $J\approx=7.597$.  This is, within unavoidable numerical error and with good approximation, the optimal control.    

The  plots of(\ref{Figura3}) (\ref{Figura4}) give the the time dependence of the optimal control and trajectory respectively.

  \begin{figure}
\centering
\includegraphics[scale=0.85]{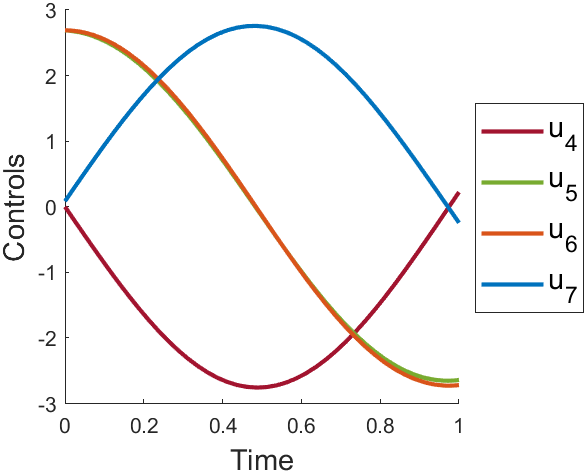}
\caption{Time dependence for the optimal controls for the problem with $\gamma_0=1$ and state transfer (\ref{statetransf}) }
\label{Figura3}
\end{figure}

  \begin{figure}
\centering
\includegraphics[scale=0.85]{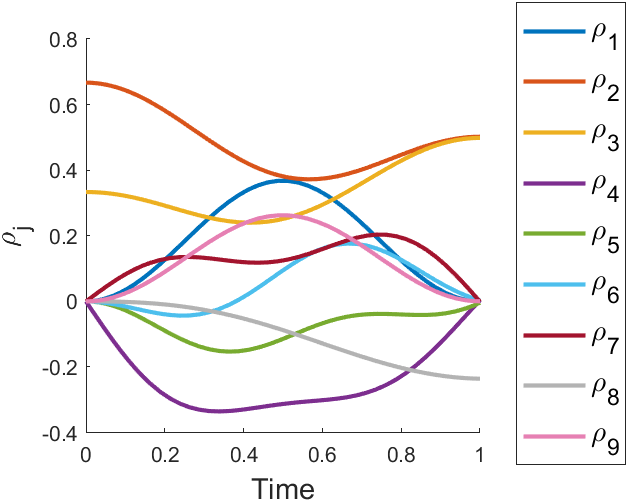}
\caption{Time dependence for the optimal density matrix for the problem with $\gamma_0=1$ and state transfer (\ref{statetransf}) }
\label{Figura4}
\end{figure}

\section{Conclusions} 

In this work, we developed a framework for optimal state transfer between mixed states for the quantum Lambda system, a configuration appearing in many quantum information processing architectures. Our approach  moves beyond adiabatic protocols to design arbitrarily fast, non-adiabatic controls. The optimal control we found   balances high fidelity and the energy of the control field. 

In order to produce a feasible solution of the optimal control design problem, we applied several techniques to bound the number and size of parameters in the resulting finite dimensional optimization search. These techniques include symmetry reduction, scaling, and averaging methods. In this respect, we believe that the system   treated here can be used as a reference to apply similar techniques to different quantum optimal control problems. For such problems the mere application of the Pontryagin’s Maximum Principle \cite{BoscaRev}, is usually not sufficient to compute the optimal control, and auxiliary techniques, such as the ones developed in this paper, are needed.

We illustrated our theory  with a numerical case study for which we explicitly  found  optimal controls and trajectories that implement a Hadamard-like gate on a mixed state,  while balancing the trade-off between control field energy  and occupancy in the unstable, higher energy level. This work provides possible pathways to developing faster, more accurate quantum operations to help build scalable, reliable quantum information processing platforms.

\begin{acknowledgments}
This material is based upon work supported   in part by the U. S. Army Research Laboratory and the U. S. Army Research Office under contract/grant number W911NF2310255. 
 D. D'Alessandro also would like to acknowledge support from a Scott Hanna Professorship at Iowa State University. 
 
\end{acknowledgments}

\appendix
\section{Proof of the bound (\ref{UB3})}\label{upboun}

After a standard change of variables, writing $\beta:=b+id$, 
\begin{widetext}
    $$
\int_0^\tau [e^{iG_9 s} P e^{-iG_9s}, \rho_0] ds=
$$
$$
\sqrt{2}\int_0^{\frac{\tau}{\sqrt{2}}} \begin{pmatrix} 0 &a(p\cos(x)-\beta \sin(x)) & (1-a)(p \sin(x)+\beta \cos(x)) \cr 
a(p\cos(x)-\beta^* \sin(x)) & 0 & 0 \cr 
(1-a)(p \sin(x)+\beta^*\cos(x))& 0 & 0  \end{pmatrix} dx=
$$
$$
\sqrt{2} \begin{pmatrix} 0 & z & w \cr z^* & 0 & 0 \cr w^* & 0 & 0 \end{pmatrix}, 
$$
\end{widetext}

with $z:=a\left (p \sin(\frac{\tau}{\sqrt{2}})-\beta(1-\cos(\frac{\tau}{\sqrt{2}}))\right),$ 
$w=(1-a)\left(p ( 1-\cos(\frac{\tau}{\sqrt{2}}))+ \beta \sin(\frac{\tau}{\sqrt{2}})  \right).  
$
Calculate the square norm of the above matrix which is (use the shorthand notation $s$ for $\sin(\frac{\tau}{\sqrt{2}})$ and $c$ for  $\cos(\frac{\tau}{\sqrt{2}}$)).\footnote{Also recall that we have chosen $a>1-a$. We use Young's inequality $2xy\leq x^2+y^2$. } 
\begin{widetext}
    $$
4(|z|^2+|w|^2)=
$$
$$
4\left( a^2(p^2s^2+|\beta|^2(1-c)^2 ) +(1-a)^2(p^2(1-c)^2+|\beta|^2s^2)+
\left( (1-a)^2-a^2  \right) (2Re(\beta)ps (1-c)) \right) \leq 
$$ 
$$
4\left( a^2(p^2s^2+|\beta|^2(1-c)^2 ) +(1-a)^2(p^2(1-c)^2+|\beta|^2s^2)+
\left( a^2 -(1-a)^2 \right) (p^2(1-c)^2+|\beta|^2s^2) \right)
$$
$$
=4 a^2(p^2+|\beta|^2)(s^2+(1-c)^2) =8 a^2(p^2+|\beta|^2)(1-c)=4a^2\|P\|^2(1-c)\leq 8a^2\|P\|^2. 
$$
\end{widetext}

Taking the square root, we obtain the bound (\ref{UB3}). ,

\section{The zero occupancy case for the case study of Section \ref{CS}; Proof of Theorem \ref{Final}}\label{ZOC1}

 Define $A:=i h_9 G_9:=\begin{pmatrix} 0 & 0 & 0 \cr 0 & 0 & a \cr 0 & -a & 0 \end{pmatrix}$ so as to write the solution of (\ref{aux1}) (\ref{aux2}) as in (\ref{explicexpre}) 
 The matrices  $A$ and $P$ are chosen so that 
 \begin{equation}\label{conditions}
  e^{A}e^{-A+P}\rho_0e^{A-P} e^{-A}=\rho_1, 
 \end{equation}
 and 
 \begin{align}\label{AandPbis}
 A&:=\begin{pmatrix} 0 & 0 & 0 \cr 0 & 0 & a \cr 0 & -a & 0  \end{pmatrix},\\
 P&:=r \begin{pmatrix}0 & \cos(\theta) & \sin(\theta) e^{i\psi} \cr -\cos(\theta) & 0 & 0 \cr -\sin(\theta) e^{-i\psi} & 0 & 0  \end{pmatrix}, \notag
 \end{align}
 for $r >0$. 
 The cost $J$, which is equal to $\frac{1}{2}\|P\|^2=r^2$,  must be minimized. Therefore the problem becomes the following:
 
 \vspace{0.5cm}
 
 \noindent{\bf Problem:} {\it Find $A$ and $P$ in (\ref{AandPbis}) such that (\ref{conditions}) are verified with a minimum value for $r>0$}. 
 
 \vspace{0.25cm}

 \noindent It is easily verified that if $X_f\rho_0X_f^\dagger=\rho_1$ with $\rho_0$ and $\rho_1$ of the form $\begin{pmatrix} 0 & 0 \cr 0 & R  \end{pmatrix}$ with $R$ a $2 \times 2$ matrix 
 with full rank, as in our case, $X_f$ must be block diagonal. Since  in our case $X_f:=e^Ae^{-A+P}$ and $e^{A}$  is block diagonal,  this  implies that $e^{-A+P}$ is block diagonal. This implies the following 
 
 \begin{lemma}\label{scalar} (cf. Proposition II.3 in \cite{conBenj}) Given (\ref{AandPbis}), define $T\in SU(3)$ as 
 $$
 T:=\begin{pmatrix} 1 & 0 & 0 \cr 0 & \cos(\theta) & \sin(\theta) e^{i\psi} \cr 0 & -\sin(\theta) e^{-i\psi} & \cos(\theta) \end{pmatrix}. 
 $$
 Then 
 \begin{equation}\label{mild}
 Te^{-A+P} T^\dagger=\begin{pmatrix} e^{i\phi} & 0 & 0 \cr 0 & e^{i\phi} & 0 \cr 0 & 0 & e^{-2i\phi} \end{pmatrix}, 
 \end{equation}
 for some $\phi \in \mathbb{R}$. If in addition $|\cos(\theta)|\not=|\sin(\theta)|$ or $\cos(\psi)\not=0$, then $ Te^{-A+P} T^\dagger$, and therefore $ e^{-A+P} $,  is a scalar matrix. 
 \end{lemma}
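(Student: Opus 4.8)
The plan is to exploit that $T$ is engineered precisely so that conjugation by it brings $-A+P$ into a convenient normal form. Writing $T=\mathrm{diag}(1,W)$ in $1\oplus 2$ block form, with $W=\begin{pmatrix}\cos\theta & \sin\theta\,e^{i\psi}\\ -\sin\theta\,e^{-i\psi}&\cos\theta\end{pmatrix}\in SU(2)$, one notes that the row $(\cos\theta,\sin\theta\,e^{i\psi})$ occupying the $(1,2),(1,3)$ entries of $P$ (up to the factor $r$) is carried by the $W$--conjugation onto $(1,0)$ (indeed $(\cos\theta,\sin\theta\,e^{i\psi})\,W^\dagger=(1,0)$). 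A short computation then gives
\[
N:=T(-A+P)T^\dagger=\begin{pmatrix} 0 & r & 0\\ -r & -D_{11} & -D_{12}\\ 0 & -D_{21} & -D_{22}\end{pmatrix},\qquad D:=W\begin{pmatrix}0 & a\\ -a & 0\end{pmatrix}W^\dagger ,
\]
where $D$ is skew-Hermitian and traceless (so $D_{22}=-D_{11}$, $D_{21}=-\overline{D_{12}}$). In particular $Ne_1=-r\,e_2$ with $r>0$, and since $T^\dagger e_1=e_1$ while $e^{-A+P}$ is block diagonal (as established just before the statement of the lemma), $e^{N}=T\,e^{-A+P}\,T^\dagger$ leaves $\mathbb{C}e_1$ invariant, so $e^{N}e_1=e^{i\phi}e_1$ for some real $\phi$.

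Next I would run a short eigenvalue--multiplicity argument for $e^{N}$. Being normal and an element of $SU(3)$, $e^{N}$ has three eigenvalues (with multiplicity) whose product is $1$. If $e^{i\phi}$ were a \emph{simple} eigenvalue, its eigenline would be a single eigenline of $N$, forcing $Ne_1\parallel e_1$ and contradicting $Ne_1=-r\,e_2\neq 0$; hence $e^{i\phi}$ has multiplicity at least two. If the multiplicity is three, then $e^{N}=e^{i\phi}I$ and $\det e^{N}=1$ gives $e^{3i\phi}=1$, so $e^{N}=\mathrm{diag}(e^{i\phi},e^{i\phi},e^{-2i\phi})$ while $e^{-A+P}=T^\dagger e^{N}T=e^{i\phi}I$ is scalar. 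If the multiplicity is exactly two, let $\mathcal E$ be the two--dimensional $e^{i\phi}$--eigenspace of $e^{N}$ and $\mathcal F=\mathcal E^{\perp}$ the line for the remaining eigenvalue $e^{i\nu}$; both are $N$--invariant, being direct sums of $N$--eigenspaces. Since $e_1\in\mathcal E$ and $\mathcal E$ is $N$--invariant, $-r\,e_2=Ne_1\in\mathcal E$, so $\mathcal E=\mathrm{span}\{e_1,e_2\}$ and $\mathcal F=\mathbb{C}e_3$; as $e^{N}$ acts on $\mathcal E$ as the scalar $e^{i\phi}$, this yields $e^{N}=\mathrm{diag}(e^{i\phi},e^{i\phi},e^{i\nu})$, and $\det e^{N}=1$ forces $e^{i\nu}=e^{-2i\phi}$. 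In every case $T\,e^{-A+P}\,T^\dagger=\mathrm{diag}(e^{i\phi},e^{i\phi},e^{-2i\phi})$, which is (\ref{mild}).

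For the scalarity statement I would go back to the "multiplicity exactly two" case: there the $N$--invariance of $\mathbb{C}e_3$ (equivalently, of $\mathrm{span}\{e_1,e_2\}$), read off the matrix $N$ above, forces $D_{12}=D_{21}=0$, i.e. $D=W\begin{pmatrix}0&a\\-a&0\end{pmatrix}W^\dagger$ is diagonal. A direct computation then pins this down: $D$ is diagonal iff it commutes with $\mathrm{diag}(1,-1)$, i.e. iff $W^\dagger\,\mathrm{diag}(1,-1)\,W=\begin{pmatrix}\cos 2\theta & \sin 2\theta\,e^{i\psi}\\ \sin 2\theta\,e^{-i\psi}& -\cos 2\theta\end{pmatrix}$ commutes with $\begin{pmatrix}0&a\\-a&0\end{pmatrix}$, whose centralizer among traceless Hermitian $2\times 2$ matrices is the real span of $\begin{pmatrix}0&i\\-i&0\end{pmatrix}$; this occurs exactly when $\cos 2\theta=0$ and $\cos\psi=0$, that is, when $|\cos\theta|=|\sin\theta|$ and $\cos\psi=0$. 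Hence if $|\cos\theta|\neq|\sin\theta|$ or $\cos\psi\neq 0$, the multiplicity--two case cannot occur, so $e^{N}=e^{i\phi}I$ and $e^{-A+P}=e^{i\phi}I$ is a scalar matrix. I expect the main obstacle to be precisely this last bookkeeping --- checking that the condition forcing $D$ to be diagonal is exactly the negation of the stated hypothesis on $\theta,\psi$; the eigenvalue argument itself is routine once $N$ has been put into the normal form above.
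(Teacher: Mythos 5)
Your proof is correct, but it runs on a different engine than the paper's. The paper starts from the identity $e^{-A+P}(-A+P)e^{A-P}=-A+P$ and, using that $e^{-A+P}$ is block diagonal while $A$ is block diagonal and $P$ block antidiagonal, splits it into the two separate statements that $e^{-A+P}$ commutes with $A$ and with $P$ \emph{individually}; conjugating by $T$ turns $P$ into $r$ times the rotation generator in the $(1,2)$ coordinate plane, whose commutant among block-diagonal $SU(3)$ matrices is exactly the set described by (\ref{mild}), and commutativity with $TAT^\dagger$ then forces scalarity unless the $(2,3)$ entry $a\left(\cos^2\theta+\sin^2\theta\, e^{2i\psi}\right)$ of $TAT^\dagger$ vanishes. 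You never separate $A$ from $P$: you use normality of $e^{N}$ and the fact that $N$ preserves each eigenspace of $e^{N}$, together with the single computation $Ne_1=-re_2$, to run an eigenvalue-multiplicity count; the degenerate (non-scalar) case is then characterized by the $N$-invariance of $\mathbb{C}e_3$, i.e., by $D=aWJW^\dagger$ being diagonal, which you correctly identify with $\cos 2\theta=\cos\psi=0$ --- the same condition as the paper's $w=0$. The paper's route is a bit more economical (no case analysis on multiplicities) and yields the reusable structural fact that $e^{-A+P}$ commutes with $A$ and $P$ separately; yours is self-contained and makes the geometric content (which coordinate subspaces $N$ must preserve) explicit. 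One shared caveat, which does not distinguish the two arguments: both silently use $a\neq 0$ (and $r>0$) --- for $a=0$ commutativity with $TAT^\dagger$ and the non-diagonality of $D$ both become vacuous, and the scalarity claim can actually fail (e.g.\ $\theta=\psi=0$, $a=0$, $r=\pi$ gives $e^{P}=\mathrm{diag}(-1,-1,1)$) --- but this is harmless in the paper's application, where $a=\pi/4+l\pi$.
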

 \begin{proof}
 Let $X_f=e^{-A+P}$. Since $X_f$ commutes with $-A+P$ we get $-X_fAX_f^{\dagger} + X_f P X_f^\dagger=-A+P$. Separating the block diagonal and the block antidiagonal parts in this equality we get that $X_f$ commutes with {\it both} $A$ and $P$. Equivalently $TX_fT^\dagger$ commutes with both $TAT^\dagger$ and $TPT^\dagger$. Let us calculate $TPT^\dagger$ with $P$ in (\ref{AandPbis}) which gives 
 $$
 TPT^\dagger:=r\begin{pmatrix} 0 & 1 & 0 \cr -1 & 0 &0 \cr 0 & 0 & 0 \end{pmatrix}.
 $$
 By writing $TX_fT^\dagger$ as a general block diagonal matrix in $SU(3)$ and imposing that it commutes with $TPT^\dagger$ we obtain the form (\ref{mild}).

 Use now the fact that $TX_fT^\dagger$ commutes with $TAT^\dagger$ where $TAT^\dagger$ is calculated as 
 $$
 a\begin{pmatrix}  0 & 0 & 0 \cr 0 & -i\sin(2\theta) \sin(\psi) & \cos^2(\theta)+\sin^2(\theta) e^{2i\psi} \cr 0 & -(\cos^2(\theta)+\sin^2(\theta)e^{-2i\psi}) & i \sin(2\theta) \sin(\psi) . \end{pmatrix}. 
 $$
 Commutativity of $TX_fT^\dagger$ in (\ref{mild}) with $TAT^\dagger$ is equivalent to commutativity of $TX_fT^\dagger$ with $\begin{pmatrix} 0 & 0 & 0 \cr 0 & 0 & w \cr 0 & -w^* & 0\end{pmatrix}$, with $w:=\cos^2(\theta)+\sin^2(\theta) e^{2i\psi}$ and if $w \not=0$ implies that $TX_fT^\dagger$ and therefore $X_f$ is a scalar matrix. $w=0$ is equivalent to 
 $e^{-i\psi}w=\cos^2(\theta) e^{-i\psi}+\sin^2(\theta)e^{i\psi}=\cos(\psi)+i\sin(\psi)(\sin^2(\theta)-\cos^2(\theta))=0$. Therefore if   $\cos(\psi)\not=0$ or $|\cos(\theta)|\not=|\sin(\theta)|$,  $X_f$ is a scalar matrix as claimed. 
  \end{proof}

According to the previous Lemma, we have to consider two cases: {\bf Case 1}:  $\cos(\psi) \not=0$ and/or $\sin^2(\theta) \not= \cos^2(\theta)$, in which case $e^{-A+P}$ must be a scalar matrix (in $SU(3)$);  {\bf Case 2}:  $\cos(\psi) =0$ and  $\sin^2(\theta) = \cos^2(\theta)$, in which case $Te^{-A+P}T^\dagger=\texttt{diag}(e^{i\phi}, e^{i\phi}, e^{-2i\phi})$. In the following two subsections we consider the two cases. We remark that although we consider the state transformation (\ref{statetransf}) the treatment can be extended to general state transformations.

\subsection{Case 1: $e^{-A+P}$ scalar}\label{scalar1} 
 Since $e^{-A+P}$ is scalar $\rho_1=e^{A}e^{-A+P}\rho_0 e^{A-P}e^{-A}=e^{A}\rho_0 e^{-A}$ which is equivalent to $\sin(2a)=1$, that is, 
 \begin{equation}\label{a}
a=\frac{\pi}{4}+l\pi, \qquad l \, \texttt{integer}. 
 \end{equation}
 $e^{-A+P}$ is a scalar matrix if and only if 
 \begin{equation}\label{IAP}
 i(-A+P)=\begin{pmatrix} 0 & ir \cos(\theta) & ir \sin(\theta) e^{i\psi} \cr -ir \cos(\theta) & 0 & - ia \cr -ir\sin(\theta) e^{-i\psi} & i a & 0  \end{pmatrix}, 
 \end{equation}
 has three coinciding (real) eigenvalues up to multiples of $2\pi$, that is,  
 \begin{align}\label{lambda123B}
 \lambda_1&=\frac{2k\pi}{3}, \quad \lambda_2=\frac{2k\pi}{3}+2m\pi,\\ \lambda_3&=-(\lambda_1+\lambda_2)=\frac{2k\pi}{3}-2(m+k)\pi.  \notag
 \end{align}
 Calculating the characteristic polynomial of the matrix $i(-A+P)$ in (\ref{IAP}) and imposing this condition leads to 
 \begin{widetext}
 $$
 \lambda^3-(r^2+a^2) \lambda+\sin(2\theta)\sin(\psi)ar^2=\lambda^3-\left[\lambda^2_1+\lambda_2^2+\lambda_1 \lambda_2\right]\lambda+\lambda_1 \lambda_2(\lambda_1+\lambda_2), 
 $$
 \end{widetext}
 
 which leads to the two conditions
 $$
 r^2+a^2=\lambda^2_1+\lambda_2^2+\lambda_1 \lambda_2, 
 $$
 $$
 \lambda_1 \lambda_2(\lambda_1+\lambda_2)=\sin(2\theta) \sin(\psi) a r^2. 
 $$
 Since $\sin(2\theta)$ and $\sin(\psi)$ are free parameters subject to the condition that are not both with absolute value equal to $1$ (otherwise we would be in the situation of {\bf Case 2}), defining $\phi_l=|a|$,\footnote{Note the dependence on $l$ is highlighted here because of (\ref{a}).}  the problem becomes to minimize 
 \begin{equation}\label{erre2}
 r^2=\lambda_1^2+\lambda_2^2+\lambda_1 \lambda_2-\phi_l^2>0, 
 \end{equation} 
 subject to $-\phi_lr^2< \lambda_1\lambda_2(\lambda_1+\lambda_2)<\phi_l r^2$, which is , 
 \begin{widetext}
     \begin{equation}\label{con3}
 -\phi_l\left( \lambda_1^2+\lambda_2^2+\lambda_1 \lambda_2-\phi_l^2\right)< \lambda_1 \lambda_2(\lambda_1+\lambda_2) <\phi_l\left( \lambda_1^2+\lambda_2^2+\lambda_1 \lambda_2-\phi_l^2\right), 
 \end{equation}
 \end{widetext}
 
 where $\lambda_1,$ $\lambda_2$ and $\phi_l$ are restricted to values given in (\ref{lambda123B}) and (\ref{a}). Therefore, the problem is a {\it constrained integer optimization problem}. To solve it, it is first useful to eliminate a factor $2\pi$ which appears everywhere and to define $\hat r,\hat\phi_l,\hat \lambda_1,\hat  \lambda_2:=\frac{\{ r,\phi_l, \lambda_1,\lambda_2\} }{2\pi}$.  
  Therefore the problem becomes to minimize the hatted quantity corresponding to (\ref{erre2}) with the hatted constrained corresponding to (\ref{con3}), with (from (\ref{lambda123B}))
  \begin{equation}\label{hatted} 
 \hat   \lambda_1=\frac{k}{3}, \quad \hat \lambda_2=\frac{k}{3}+m, \quad \hat \phi_l:=\left |\frac{1}{8}+\frac{l}{2}\right|.   
  \end{equation}
 A direct verification shows that condition (\ref{con3}) splits into two conditions of the form 
 \begin{equation}\label{ineq1}
 \left( \hat \phi_l +\hat \lambda_1\right)  \left( \hat \phi_l -(\hat \lambda_1+\hat \lambda_2)\right)  \left( \hat \phi_l +\hat \lambda_2\right)<0
 \end{equation}
  \begin{equation}\label{ineq2}
 \left( \hat \phi_l -\hat \lambda_1\right)  \left( \hat \phi_l +(\hat \lambda_1+\hat \lambda_2)\right)  \left( \hat \phi_l -\hat \lambda_2\right)<0. 
 \end{equation}

 \vspace{0.25cm}

 \noindent {\bf Analysis of the region described by (\ref{ineq1}) (\ref{ineq2})} 
 
 \vspace{0.25cm}
 
 Since we can arbitrarily permute $\lambda_1$, $\lambda_2$ and $\lambda_3=-(\lambda_1+\lambda_2)$ we shall assume, without loss of generality, that 
 \begin{equation}\label{order}
 \hat \lambda_1\geq \hat \lambda_2 \geq -(\hat \lambda_1+\hat \lambda_2).
 \end{equation}
 Notice that this implies $\hat \lambda_1 \geq 0$. 
 Under this assumption, let us analyze, which sign combinations are possible in (\ref{ineq1})  and (\ref{ineq2}). Let us start with (\ref{ineq1}).

 The sign combination $++-$ is not possible because we would have $\hat \phi_l > \hat \lambda_1 + \hat \lambda_2$ and $\hat \phi_l < -\hat  \lambda_2$ which would imply $\hat \lambda_2 < -(\hat \lambda_1+\hat \lambda_2)$ which contradicts (\ref{order}). Similarly the combination $-++$ is not possible because $\hat \phi_l < -\hat \lambda_1$ and $\hat \phi_l > \hat \lambda_1+ \hat \lambda_2$ would imply $\hat \lambda_1<-(\hat \lambda_1+\hat \lambda_2)$ which again contradicts (\ref{order}).  In fact any combination $-**$ is not possible, since $\hat \phi_l < -\hat \lambda_1$ would imply $\hat \phi_l <0$. Therefore we only consider the sign combination $+-+$ for (\ref{ineq1}) 
 
 Let us now analyze (\ref{ineq2}). The combination $+-+$ is not possible because (again together with (\ref{order})) would mean $\hat \phi_l > \hat \lambda_1 \geq -(\hat \lambda_1+\hat \lambda_2) > \hat \phi_l$, a contradiction. Analogously, the combination $+ + -$ is not possible because it would imply $\hat \phi_l > \hat \lambda_1 \geq \hat \lambda_2 > \hat \phi_l$, another contradiction. This leaves the combinations $---$ and $-++$ for (\ref{ineq2}).

 Now the combination $+-+$ for (\ref{ineq1}) together with $---$ for (\ref{ineq2}) is not possible because $\hat \phi_l > -\hat \lambda_2$ and $\hat \phi_l < \hat \lambda_2$  would imply $\hat \lambda_2>0$. However this would give that $\hat \phi_l < -(\hat \lambda_1+\hat \lambda_2) <0$ which has to be excluded. We are left with the combination $+-+$ for (\ref{ineq1}) and $-++$ for (\ref{ineq2}). This gives that the following inequalities have  to be satisfied simultaneously. From (\ref{ineq1}) 
 
 \begin{equation}\label{Fineq1}
  \hat \phi_l > -\hat \lambda_1, \qquad  \hat \phi_l < \hat \lambda_1+ \hat \lambda_2, \qquad \hat \phi_l > -\hat \lambda_2;
  \end{equation}
 From (\ref{ineq2}) 
  \begin{equation}\label{Fineq2}
  \hat \phi_l <  \hat \lambda_1, \qquad  \hat \phi_l > - \hat \lambda_1- \hat \lambda_2, \qquad \hat \phi_l > \hat \lambda_2;
  \end{equation}
 From the first one of (\ref{Fineq2}) we get that $\hat \lambda_1$ is strictly positive which makes the first on of (\ref{Fineq1}) obvious. (\ref{order}) implies that the third one of (\ref{Fineq2}) implies the second one of (\ref{Fineq2}) which is therefore  redundant. Combining the remaining inequality, the region to explore for $\hat \phi_l$ is 
 \begin{equation}\label{region}
 0\leq |\hat \lambda_2 | < \hat \phi_l < \min \{ \hat \lambda_1, \hat \lambda_1+\hat \lambda_2 \} 
 \end{equation}

 We now consider the definitions of $\hat \lambda_1$ and $\hat \lambda_2$ in terms of $k$ and $m$ and identify the values of $k$ and $m$ which are compatible with (\ref{region}). We need both $\hat \lambda_1 > |\hat \lambda_2|$ and $\hat \lambda_1+\hat \lambda_2 > |\hat \lambda_2|$. The first inequality gives $-2 \frac{k}{3} < m< 0$ which implies $\hat \lambda_1+\hat \lambda_2=\frac{2k}{3}+m>0$. With this, the second inequality, which is $\left| \frac{k}{3}+m\right| < \frac{2k}{3}+m$,  gives $-\frac{k}{2}<m$. Therefore the values of $k >0$ and $m$ to be considered are such that $m$ is in the interval $\left( -\frac{k}{2}, 0 \right)$. In particular 
 
 \begin{enumerate}
 
 \item (Case A) If $m \in \left[ -\frac{k}{3}, 0 \right)$ the minimum in (\ref{region}) is $\frac{k}{3}$. 
 
 \item (Case B) If $m \in \left( \frac{
 -k}{2}, \frac{-k}{3} \right)$,  the minimum in (\ref{region}) is $\frac{2k}{3}+m$. 
 
 \end{enumerate}
 
 \vspace{0.25cm}

  \noindent {\bf Minimization of $\hat r^2$} 
  
   \vspace{0.25cm}

  The minimization of $r^2$ in (\ref{erre2}) is obtained with a min/min procedure. One first fix $k$ and $m$ in (\ref{hatted}) so that $m$ is 
  in the interval $\left( -\frac{k}{2}, 0 \right)$ as described above   and  finds $l$ so that $\hat \phi_l^2$ is maximized over the region (\ref{region}). 
 This gives for $\hat r^2$ a function of $k$ and $m$ only which is then minimized. It is  convenient to write $k:=3j+s$ with $s=0,1,2$. Notice that both $j$ and $s$ cannot be zero since $k>0$. Let us consider the above cases $A$ and $B$ separately. \\
 
 \noindent{}Case A:=
 \begin{enumerate}
 \item If $s=0$, the maximum of $\hat \phi_l$ is $\hat \phi_l:=\frac{k}{3}-\frac{1}{8}$. It is achieved for $l=-\frac{2 k}{3}=-2j$ 
  \begin{proof}
  If $l\leq 0$ the constraint is written as $\frac{l}{2}+\frac{1}{8} <j$ which gives $l< 2j -\frac{1}{4}$ with the maximum achieved at $l=2j-1$. Thus the value of $\hat \phi_l$ is 
  $\hat \phi_l =\frac{l}{2}+\frac{1}{8}=\frac{2j-1}{2}+\frac{1}{8}=j-\frac{3}{8}$. If $l<0$, $\hat \phi_l=\frac{|l|}{2}-\frac{1}{8}.$ It is 
  maximized for $|l|=2j$ which gives the value for
   $\hat \phi_l=j-\frac{1}{8}
   =\frac{k}{3}-\frac{1}{8}$. 
   Since this value is bigger then the previous one, this is the maximum. 
  \end{proof}
The proofs of the following statements are similar.
  \item If $s=1$, the maximum of $\hat \phi_l$ is $\hat \phi_l:=\frac{k-1}{3}+\frac{1}{8}$. It is achieved for $l=\frac{2 (k-1)}{3}=2j$ 
 
 \item If $s=2$, the maximum of $\hat \phi_l$ is $\hat \phi_l:=\frac{k-2}{3}+\frac{5}{8}$. It is achieved for $l=\frac{2 (k-2)}{3}+1=2j+1$ 
 
 \end{enumerate}

 \noindent{}Case B:=
 \begin{enumerate}
 
 \item If $s=0$, the maximum of $\hat \phi_l$ is $\hat \phi_l:=2\frac{k}{3}+m-\frac{1}{8}$. It is achieved for $l=-4j-2m= -\frac{4 k}{3}-2m$  (Notice the continuity with Case A when $m=\frac{-k}{3}$). 
 
   \item If $s=1$, the maximum of $\hat \phi_l$ is $\hat \phi_l:=2 \frac{k-1}{3}+m+\frac{5}{8}$. It is achieved for $l=\frac{4 (k-1)}{3}+2m+1=4j+2m+1$. 
   
   \item  If $s=2$, the maximum of $\hat \phi_l$ is $\hat \phi_l:=2 \frac{k-2}{3}+m+\frac{9}{8}$. It is achieved for $l=\frac{4 (k-2)}{3}+2m+2=4j+2m+2$.

 \end{enumerate}
 
 \vspace{0.5cm}

 Let us now write the `hatted' version of $r^2$ in (\ref{erre2}) by replacing the values $\hat \lambda_1$ and $\hat \lambda_2$ in (\ref{hatted}). Since we want to 
 minimize this over $k$ and $m$, for $m$ in the interval $\left( - \frac{k}{2}, 0 \right)$ we replace $\hat \phi_l^2$ with the {\it maximum}  over $l$ of $\hat \phi_l$ which is calculated as above and it is a function of $k$ and $m$. We denote this function  by $F=F(k,m)$. Therefore the problem is to minimize over $k$ and $m$, 
 \begin{align}\label{errehat}
 \hat r^2&=\hat \lambda_1^2+\hat \lambda_2^2+\hat \lambda_1 \hat \lambda_2-F(k,m)\\
 &= \frac{k^2}{3}+mk+m^2-F(k,m). \notag
 \end{align} 
 We fix $k>0$ and minimize such a function over $m \in \left( -\frac{k}{2}, 0\right)$, and then minimize  over $k$. If $m \in \left[ -\frac{k}{3},0 \right)$,  $F(k,m)$ does not depend on $m$. The only dependence on $m$ in (\ref{errehat})  comes from the term $mk+m^2$ which is increasing with $m$, since we have $m>\frac{-k}{2}$, the minimum is obtained at the smallest  integer  
 greater than or equal to $-\frac{k}{3}$. As before we write $k=3j+s$, $s=0,1,2$ and consider the three cases, using the expressions of $F(k)$ calculated above in the Case A. 
 In all cases the minimizing $m$ is $m=-j$.  
 
 \begin{enumerate}
 
 \item If $s=0$ the minimum $\hat r^2$ as a function of $j$ is 
 \begin{equation}\label{CA1}
 \hat r^2=\frac{k^2}{3}-jk+j^2-\left( j -\frac{1}{8} \right)^2=\frac{j}{4}-\frac{1}{64}. 
 \end{equation}

 \item If $s=1$ the minimum $\hat r^2$ as a function of $j$ is 
 \begin{equation}\label{CA2}
 \hat r^2=\frac{k^2}{3}-jk+j^2-\left( j +\frac{1}{8} \right)^2=\frac{3}{4}j+\frac{61}{192}
 \end{equation}
 
 \item If $s=2$ the minimum $\hat r^2$ as a function of $j$ is 
 \begin{equation}\label{CA3}
 \hat r^2=\frac{k^2}{3}-jk+j^2-\left( j +\frac{5}{8} \right)^2=\frac{3}{4}j+\frac{181}{192}
 \end{equation}
 
 \end{enumerate}

 We now consider the Case B above, that is, $m \in \left( -\frac{k}{2}, -\frac{k}{3} \right)$. An examination of all cases ($s=0,1,2$) above shows that the minimum $\hat \phi_l$ is 
 always of the form $\hat \phi_l=m+f_s(k)$, where in the cases $s=0,1,2$, we have $f_0(k)=\frac{2k}{3}-\frac{1}{8}$, $f_1(k)=\frac{2k}{3}-\frac{1}{24}$, $f_2(k)=\frac{2k}{3}-\frac{5}{24}$, respectively. With these expressions $\hat r$ has the form (after the maximization of $\hat \phi_l$) 
 \begin{align}\label{r45}
 \hat r^2&=\frac{k^2}{3}+km+m^2-\left( m+f_s(k) \right)^2\\
 &= \frac{k^2}{3}+\left(k-2f_s(k)\right)m-f_s^2(k), \notag
 \end{align}
 and a direct verification of the three case, $s=0,1,2$,  shows that the coefficient $\left(k-2f_s(k)\right)$ is always $<0$, so that the function decreases with $m$. The minimum (for fixed $k$) is achieved at the largest integer (if any) which is strictly less than $-\frac{k}{3}$ and strictly greater than $-\frac{k}{2}$. By writing $k=3j+s$,  such an integer $m$ is in all cases $m=-j-1$ with the restriction that $j>2-s$. Replacing in (\ref{r45}) we find the three cases similarly to (\ref{CA1}) (\ref{CA2}) (\ref{CA3})
 
  \begin{enumerate}
 
 \item If $s=0$ the minimum $\hat r^2$ as a function of $j$ is (replacing $f_0(k)$ in (\ref{r45})) and for $m=-j-1$, 
 \begin{align}\label{BA1}
 \hat r^2&=\frac{(3j)^2}{3}+ \left( 3j-2(j-\frac{1}{8}) \right)m - \left( j-\frac{1}{8} \right)^2\\
 &=j^2-j-\frac{17}{64} \qquad (j>2) \notag
 \end{align}

 \item If $s=1$ the minimum $\hat r^2$ as a function of $j$ is (replacing $f_1(k)$ in (\ref{r45})) and for $m=-j-1$, 
 \begin{align}\label{BA2}
 \hat r^2&=\frac{(3j+1)^2}{3}+\left( 3j+1-2(2j+\frac{5}{8})\right)m-\left( 2j+\frac{5}{8}\right)^2\\
 &=\frac{3j}{4}+\frac{37}{192}. \qquad( j>1) \notag
 \end{align}
 
 \item If $s=2$ the minimum $\hat r^2$ as a function of $j$ is (replacing $f_2(k)$ in (\ref{r45})) and for $m=-j-1$, 
 \begin{align}\label{BA3}
 \hat r^2&=\frac{(3j+2)^2}{3}+\left( 3j+2-2(2j+\frac{27}{24}) \right)m - \left( 2j+\frac{27}{24}\right)^2\\
 &=\frac{3}{4}j+\frac{61}{192} \qquad ( j > 0) \notag
 \end{align}
 
 \end{enumerate}

 Finally we have to minimize over $k$ the previous expressions in (\ref{CA1})-(\ref{CA3}, (\ref{BA1})-(\ref{BA3}). Consider again $k=3j+s$ and the three possibilities $s=0,1,2$. By comparing (\ref{CA1}) (which has the only restriction $j \geq 1$) and (\ref{BA1}), we find that the minimum is achieved for $j=1$ and it is $\frac{15}{64}$. For $s=1$ by comparing (\ref{CA2}) (which has the only restriction $j \geq 1$) and (\ref{BA2}), we find that the minimum is achieved for $j=1$ and it is $\frac{3}{4}+\frac{61}{192}$. Finally for $s=2$ by comparing (\ref{CA3}) (which has the only restriction $j \geq 1$) and (\ref{BA3}), we find that the minimum is achieved for $j=1$ and it is $\frac{3}{4}+\frac{61}{192}$. Comparing the three cases, the minimum is $\frac{15}{64}$ achieved for $j=1$ $s=0$, which is $k=3$, $m=\frac{-k}{3}=-1$. The corresponding value of $l$ is obtained from the first subcase of case A above and it is $l=-2$. By multiplying $\hat r$ by $2\pi$ we obtain the possible minimum value for $r^2$ with the constraint that $|\sin(2\theta)\sin(\psi)|<1$ in (\ref{AandPbis}). We summarize:
 \begin{lemma}\label{summary2}
 The minimum possible value for the cost  $J=r^2$ under the assumption that $|\sin(2\theta)\sin(\psi)|<1$ in (\ref{AandPbis}) is 
 $$
 r^2_{min}=\frac{15 \pi^2}{16}. 
 $$
 \end{lemma}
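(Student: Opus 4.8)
The plan is to reduce the minimization of $J = r^2$ to a constrained integer optimization problem and then solve it by an exhaustive min/min procedure. Since we are in the case where $e^{-A+P}$ is scalar in $SU(3)$ (Lemma \ref{scalar}), I would first translate this into the statement that the Hermitian matrix $i(-A+P)$ of (\ref{IAP}) has its three real eigenvalues coinciding modulo $2\pi$, which forces the parametrization (\ref{lambda123B}) by integers $k, m$ together with $a = \pi/4 + l\pi$ from (\ref{a}). Expanding the characteristic polynomial of (\ref{IAP}) and matching it to $\lambda^3 - (\lambda_1^2 + \lambda_2^2 + \lambda_1\lambda_2)\lambda + \lambda_1\lambda_2(\lambda_1+\lambda_2)$ yields two scalar relations: the first is (\ref{erre2}), $r^2 = \lambda_1^2 + \lambda_2^2 + \lambda_1\lambda_2 - \phi_l^2$ with $\phi_l = |a|$, and the second reads $\lambda_1\lambda_2(\lambda_1+\lambda_2) = \sin(2\theta)\sin(\psi)\, a r^2$. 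Under the Lemma's hypothesis $|\sin(2\theta)\sin(\psi)| < 1$ (the boundary value being exactly Case 2), the second relation becomes the open inequality (\ref{con3}), so the task is to minimize (\ref{erre2}) over the admissible integer data subject to (\ref{con3}).

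Next I would normalize by dividing through by $2\pi$, introducing the hatted variables of (\ref{hatted}), and factor the cubic constraint (\ref{con3}) into the pair of product inequalities (\ref{ineq1})--(\ref{ineq2}). Exploiting the permutation freedom among $\lambda_1, \lambda_2, \lambda_3 = -(\lambda_1 + \lambda_2)$, I fix the ordering (\ref{order}); then a sign chase rules out every sign pattern except $+\,-\,+$ in (\ref{ineq1}) and $-\,+\,+$ in (\ref{ineq2}), collapsing the feasible region to the strip (\ref{region}), $|\hat\lambda_2| < \hat\phi_l < \min\{\hat\lambda_1, \hat\lambda_1 + \hat\lambda_2\}$. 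Translating back to $(k,m)$ shows $k > 0$ and $m \in (-k/2, 0)$, which splits into Case A ($m \in [-k/3, 0)$, where the $\min$ in (\ref{region}) equals $\hat\lambda_1$) and Case B ($m \in (-k/2, -k/3)$, where it equals $\hat\lambda_1 + \hat\lambda_2$).

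The core of the proof is then the min/min step. For fixed admissible $(k,m)$, writing $k = 3j + s$ with $s \in \{0,1,2\}$, I would maximize $\hat\phi_l^2$ over the integer $l$ subject to (\ref{region}); in each of the six $(\text{Case},s)$ combinations this produces an explicit optimal $l$ and a value $F(k,m)$. I then minimize $\hat r^2 = k^2/3 + mk + m^2 - F(k,m)$ of (\ref{errehat})--(\ref{r45}) over $m$, which is monotone in $m$ on each range, so the optimum is the endpoint integer ($m = -j$ in Case A, $m = -j-1$ in Case B), yielding the six one-parameter families (\ref{CA1})--(\ref{CA3}) and (\ref{BA1})--(\ref{BA3}) in $j$, each non-decreasing in $j$. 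Comparing their minimum values gives the global minimum $\hat r^2 = 15/64$, attained at $j = 1$, $s = 0$, i.e. $k = 3$, $m = -1$, $l = -2$; multiplying by $(2\pi)^2$ gives $r^2_{\min} = 15\pi^2/16$.

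I expect the main obstacle to be the combinatorial bookkeeping in the two reduction steps: correctly excluding every infeasible sign pattern of the factored inequalities (\ref{ineq1})--(\ref{ineq2}) using only the ordering (\ref{order}) and the positivity of $\hat\phi_l$ it forces, and then carefully tracking, across all six $(\text{Case},s)$ branches, which integer $l$ maximizes $\hat\phi_l$ inside the \emph{open} strip (\ref{region}) -- the strictness of the inequalities is essential, since it is what forces $\hat\lambda_1 > 0$ and confines $m$ to a genuinely interior interval. Once those reductions are pinned down, the final comparison among (\ref{CA1})--(\ref{BA3}) is routine arithmetic.
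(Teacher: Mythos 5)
Your proposal follows the paper's own proof essentially step for step: the reduction to matching eigenvalues modulo $2\pi$, the two scalar relations from the characteristic polynomial, the factorization of (\ref{con3}) into (\ref{ineq1})--(\ref{ineq2}) with the sign chase under the ordering (\ref{order}), the split into Cases A and B with $k=3j+s$, and the final min/min comparison yielding $\hat r^2 = 15/64$ at $k=3$, $m=-1$, $l=-2$. The approach and all key intermediate quantities match the paper's argument, so this is correct and not a different route.
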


 \subsection{Case 2: $e^{-A+P}$ possibly nonscalar}\label{possnsc}
 
 In this case $|\sin(2\theta)|=1$ and $|\sin(\psi)|=1$. Let us rewrite $A$ and $P$ in (\ref{AandPbis})
  as (notice we take   $r>0$ using the observations in Subsection \ref{SRed})
 \begin{equation*}
      P=\begin{pmatrix} 0 & \frac{r}{\sqrt{2}} & \frac{ir}{\sqrt{2}} \cr -\frac{r}{\sqrt{2}} & 0 & 0 \cr  \frac{ir}{\sqrt{2}} & 0 & 0 \end{pmatrix}, \qquad A=\begin{pmatrix} 0 & 0 & 0 \cr 0 & 0 & a \cr 0 & -a & 0 \end{pmatrix}.
 \end{equation*}

 Consider the similarity transformation $(-A+P) \rightarrow T (-A+P) T^\dagger$ with 
 $$
 T:=\begin{pmatrix} 1 & 0 & 0 
 \cr 0 & \frac{1}{\sqrt{2}} & \frac{i}{\sqrt{2}}
  \cr 0 &   \frac{i}{\sqrt{2}} & \frac{1}{\sqrt{2}} 
 \end{pmatrix}, 
 $$
 which gives
 $$
 T (-A+P) T^\dagger=\begin{pmatrix} 0 & r & 0 \cr -r &  ia & 0 \cr 0 & 0 &  ia \end{pmatrix}. 
 $$
 We explicitly calculate $e^{T(-A+P)T^\dagger}$ which gives 
 \begin{widetext}
 $$
 e^{T(-A+P)T^\dagger}=\begin{pmatrix} e^{ i \frac{a}{2}} & 0 & 0 \cr 0 & e^{ i \frac{a}{2}} & 0 \cr 0 & 0 & e^{- i a} 
 \end{pmatrix}
 \begin{pmatrix}\left( \cos(\omega r) {\bf 1}_2+\frac{\sin(\omega r)}{\omega} \Omega \right) & 0 \cr 0 & 1 \end{pmatrix}, 
 $$
 \end{widetext}
 
 where $\Omega:=\begin{pmatrix} - \frac{ia}{2r} & 1 \cr -1 & \frac{ia}{2r}\end{pmatrix}$ and $\omega:=\sqrt{1+\frac{a^2}{4r^2}}$. Since $ e^{T(-A+P)T^\dagger}$ must be block diagonal (with diagonal blocks of dimension $1$ and $2$ in that order) we must have $\sin(\omega r)=0$, which gives the first condition relating $a$ and $r$, that is,  
 \begin{equation}\label{condir}
 \omega r=\frac{\sqrt{4r^2+a^2}}{2}=l\pi. 
 \end{equation} 
 With this condition we distinguish two cases: $l$ odd and $l$ even. 
 
 \vspace{0.25cm}
 
 \noindent{\bf $l$ odd}
 
  \vspace{0.25cm}

 We have 
 $$
 Te^{(-A+P)} T^\dagger =\begin{pmatrix} -e^{ i\frac{a}{2}} & 0 & 0 \cr 0 & -e^{ i\frac{a}{2}} & 0 \cr 0 & 0 & e^{- ia} \end{pmatrix}, 
 $$
 so that 
 \begin{align*}
     e^{(-A+P)}&=T^\dagger \begin{pmatrix} -e^{i\frac{a}{2}} & 0 & 0 \cr 0 & -e^{ i\frac{a}{2}} & 0 \cr 0 & 0 & e^{-ia} \end{pmatrix} T\\
     &=\begin{pmatrix} 
 -e^{ i \frac{a}{2}} & 0 \cr 0 & - i e^{- \frac{ia}{4}} K \end{pmatrix}, \notag
 \end{align*}
 with $K:=\begin{pmatrix} \sin(\frac{3a}{4}) & \cos(\frac{3a}{4} )\cr - \cos(\frac{3a}{4}) & \sin (\frac{3a}{4}) \end{pmatrix}$. Calculating $e^Ae^{-A+P}$, we obtain
 $$
 e^A e^{-A+P}=\begin{pmatrix} 
 -e^{ i \frac{a}{2}} & 0 \cr 0 & - i e^{- \frac{ia}{4}} \tilde  K \end{pmatrix}, 
 $$
 with $\tilde K=\begin{pmatrix} \cos(a) & \sin(a) \cr -\sin(a) & \cos(a) \end{pmatrix}K=\begin{pmatrix} \cos(\frac{\pi}{2} +\frac{a}{4}) & \sin (\frac{\pi}{2} +\frac{a}{4}) \cr 
 -\sin( \frac{\pi}{2} +\frac{a}{4})&  \cos(\frac{\pi}{2} +\frac{a}{4}) \end{pmatrix}$. In general,  direct verification shows that with $\rho_0$ and $\rho_1$ given in (\ref{rho0}) (\ref{realrho}), the set of matrices $X$  in $SU(3)$ such that $X\rho_0X^\dagger=\rho_1$ is given by 
 \begin{widetext}
     {\small{
\begin{equation}\label{esserho}
\left\{  X\in SU(3) \, | \, X=\begin{pmatrix} e^{-i \hat \phi} & 0 & 0 \cr 0 & \cos(\theta)e^{i \lambda} & \sin(\theta) e^{i\mu} \cr 0 & -\sin(\theta)e^{i\lambda} & \cos(\theta) e^{i\mu} \end{pmatrix}, \,   \frac{1}{2} -\frac{(1-2a)}{2} \cos(2\theta)=b, \frac{(1-2a)}{2} \sin(2\theta) =N  \right\}. 
\end{equation}}}
 \end{widetext}

The condition of formula (\ref{esserho}) for our problem gives $\theta=\frac{\pi}{4}+k\pi$ for integer $k$ and using $\theta=\frac{\pi}{2}+\frac{a}{4}$ as in the previous calculation, we get the condition for $a$
 \begin{equation}\label{condia}
 a=(4k-1)\pi. 
 \end{equation}
 Plugging this into (\ref{condir}) we have to find integers $k$ and odd $l$ to minimize 
 $$
 4r^2=(4l^2-(4k-1)^2)\pi^2 >0 
 $$
 The minimum is achieved for $k=0$ and $l=1$ which gives $3\pi^2$ on the right hand side. If $k>0$, the condition $4l^2-(4k-1)^2>0$ gives $|l| >- \frac{1}{2}+2k$, and since $l$ is an (odd) integer, we have $|l|\geq 2k$, which gives $4l^2-(4k-1)^2 \geq 8k-1\geq 7>3$. Analogously one sees that if $k<0$ we obtain a value $> 3$. 

   \vspace{0.25cm}
 
 \noindent{\bf $l$ even}
 
   \vspace{0.25cm}
 
 The treatment is similar to the $l$ odd case.   
 We have 
 $$
 Te^{(-A+P)} T^\dagger =\begin{pmatrix} e^{ i\frac{a}{2}} & 0 & 0 \cr 0 & e^{ i\frac{a}{2}} & 0 \cr 0 & 0 & e^{- ia} \end{pmatrix}, 
 $$
 so that 
 $$
 e^{(-A+P)}=T^\dagger \begin{pmatrix} e^{ i\frac{a}{2}} & 0 & 0 \cr 0 & e^{ i\frac{a}{2}} & 0 \cr 0 & 0 & e^{- ia} \end{pmatrix} T=\begin{pmatrix} 
 e^{ i \frac{a}{2}} & 0 \cr 0 &  e^{- \frac{ia}{4}} H \end{pmatrix}, 
 $$
 with $H:=\begin{pmatrix} \cos(\frac{3a}{4}) & - \sin(\frac{3a}{4} )\cr  \sin(\frac{3a}{4}) & \cos (\frac{3a}{4}) \end{pmatrix}$. Calculating $e^Ae^{-A+P}$, we obtain
 $$
 e^A e^{-A+P}=\begin{pmatrix} 
 e^{ i \frac{a}{2}} & 0 \cr 0 &  e^{- \frac{ia}{4}} \tilde  H \end{pmatrix}, 
 $$
 with $\tilde H=\begin{pmatrix} \cos(a) & \sin(a) \cr -\sin(a) & \cos(a) \end{pmatrix} H=\begin{pmatrix} \cos(\frac{a}{4}) & \sin (\frac{a}{4}) \cr 
 -\sin( \frac{a}{4})&  \cos(\frac{a}{4}) \end{pmatrix}$.Using again the condition from  formula (\ref{esserho}) for our problem gives $\theta=\frac{\pi}{4}-k\pi$ for integer $k$ and using $\theta=\frac{a}{4}$ as in the previous calculation, we get the condition for $a$
 \begin{equation}\label{condiabis}
 a={\pi}+4k\pi. 
 \end{equation}
 Plugging this into (\ref{condir}) we have to find integers $k$ and {\it even} $l$ to minimize 
 $$
 4r^2=(4l^2-(1+4k)^2)\pi^2 >0. 
 $$
We notice that, if $k\geq 0$, for fixed value of $k$, the optimal admissible value of even $l$ is $2k+2$ which gives $(4l^2-(1+4k)^2)\geq 24k+15\geq 15$. Therefore this is not the optimal. If $k <0$, the optimal value for $|l|$ is $|l|=3|k|$ which gives $4l^2-(1+4k)^2=16k^2-4l^2-(1+4k)^2)\geq 7 >3$. Thus, this also not optimal.

 \subsection{Summary of Cases 1 and 2 and conclusion of the proof}

It follows from the previous discussion that the minimal cost $r^2$ is found in the Case $2$ of Subsection \ref{possnsc}  since we found a value of $J=r^2=\frac{3}{4}\pi^2$ smaller than the minimum value found in Case $1$ of Section \ref{scalar1}  which was of $J=r^2=\frac{15}{16}\pi^2$. Using the corresponding value $r=\frac{\sqrt{3}\pi}{2}$ and the value $a$ obtained from (\ref{condia}) with $k=0$ as described above, which gives the matrices of Theorem \ref{Final}.

\bibliography{PMPDensity.bib}

\end{document}